\pdfoutput=1
\RequirePackage{ifpdf}
\ifpdf 
\documentclass[pdftex]{sigma}
\else
\documentclass{sigma}
\fi

\numberwithin{equation}{section}

\newtheorem{Theorem}{Theorem}[section]
\newtheorem{Lemma}[Theorem]{Lemma}
\newtheorem{Proposition}[Theorem]{Proposition}
 { \theoremstyle{definition}
\newtheorem{Remark}[Theorem]{Remark} }

\usepackage{pdflscape}
\usepackage{mathrsfs,bm}

\newcommand{\bma}{{\bm{a}}}
\newcommand{\bmb}{{\bm{b}}}
\newcommand{\bmc}{{\bm{c}}}
\newcommand{\bme}{{\bm{e}}}

\newcommand{\bmu}{{\bm{u}}}
\newcommand{\bmv}{{\bm{v}}}
\newcommand{\bmw}{{\bm{w}}}

\newcommand{\bbu}{{\bm{\bar{u}}}}

\newcommand{\mcF}{\mathcal{F}}

\newcommand{\mcH}{\mathcal{H}}
\newcommand{\mcK}{\mathcal{K}}

\newcommand{\mcI}{\mathcal{I}}
\newcommand{\mcP}{\mathcal{P}}

\newcommand{\mscrF}{\mathscr{F}}

\newcommand{\textf}[1]{\text{\small #1}}

\newcommand{\bra}[1]{\bigl (#1\bigr )}

\newcommand{\bc}[1]{[\![#1]\!]}

\newcommand{\pobr}[1]{\left \{#1\right \}}
\newcommand{\dual}[1]{\langle #1 \rangle}

\newcommand{\pmatrx}[1]{\begin{pmatrix} #1 \end{pmatrix}}

\DeclareMathOperator{\ad}{ad}
\DeclareMathOperator{\Vect}{Vect}

\newcommand{\dx}{{\rm d}\bm{x}}

\begin{document}

\allowdisplaybreaks

\newcommand{\arXivNumber}{1906.08388}

\renewcommand{\PaperNumber}{094}

\FirstPageHeading

\ShortArticleName{Bi-Hamiltonian Systems in (2+1) and Higher Dimensions Defined by Novikov Algebras}

\ArticleName{Bi-Hamiltonian Systems in (2+1)\\ and Higher Dimensions Defined by Novikov Algebras}

\Author{B{\l}a\.zej M. SZABLIKOWSKI}

\AuthorNameForHeading{B.M.~Szablikowski}

\Address{Faculty of Physics, Division of Mathematical Physics, Adam Mickiewicz University,\\ ul.~Uniwersytetu Pozna\'nskiego~2, 61-614~Pozna\'n, Poland}
\Email{\href{mailto:bszablik@amu.edu.pl}{bszablik@amu.edu.pl}}

\ArticleDates{Received June 21, 2019, in final form November 21, 2019; Published online November 29, 2019}

\Abstract{The results from the article [Strachan I.A.B., Szablikowski B.M., \textit{Stud. Appl. Math.} {\bf 133} (2014), 84--117] are extended over consideration of central extensions allowing the introducing of additional independent variables. Algebraic conditions associated to the first-order central extension with respect to additional independent variables are derived. As result $(2+1)$- and, in principle, higher-dimensional multicomponent bi-Hamiltonian systems are constructed. Necessary classification of the central extensions for low-dimensional Novikov algebras is performed and the theory is illustrated by significant $(2+1)$- and $(3+1)$-dimensional examples.}

\Keywords{Novikov algebras; $(2+1)$- and $(3+1)$-dimensional integrable systems; bi-Hamil\-to\-nian structures; central extensions}

\Classification{37K10; 17B80; 37K30}

\section{Introduction}

In the article \cite{SS} we have presented construction of $(1+1)$-dimensional integrable bi-Hamiltonian systems associated with Novikov algebras. These systems are multicomponent generalizations of the Camassa--Holm equation~\cite{CH} and can be interpreted as Euler equations on the respective centrally extended Lie algebras. On the other hand, the central extension procedure is one of the most effective methods allowing for the construction of $(2+1)$ analogs of $(1+1)$-dimensional systems. However, such procedure is not always possible. All the more, there is very limited number of approaches allowing for the systematic construction of $(3+1)$- and higher-dimensional integrable systems, see the recent survey~\cite{BSe1} and references therein.

Novikov algebras naturally appear in the context of the homogeneous first-order Hamiltonian operators \cite{BN, GD}:
\begin{gather}\label{DN}
		\mcP^{ij} = \big(b^{ij}_k+b^{ji}_k\big)u^k\partial_x + b^{ij}_k u^k_x, 	
\end{gather}
which are a special case of Dubrovin--Novikov operators of hydrodynamic type~\cite{DN,DN2}. The algebraic conditions for~\eqref{DN} to be Hamiltonian means that~$b^{ij}_k$ are structure constants of a~Novikov algebra. The Hamiltonian operators \eqref{DN} define Lie--Poisson structures associated with the so-called translationally invariant Lie algebras that are in one to one correspondence with Novikov algebras. For more information about this and directly related topics see~\cite{SS} and the recent works~\cite{St1,St2,SZ}.

Novikov algebras also play significant role in the theory of multi-dimensional Hamiltonian operators of Dubrovin--Novikov type:
\begin{gather}\label{mDN}
	\mcP^{ij} = \sum_{\alpha=1}^D\big[g^{ij\alpha}(\bmu)\partial_{x^\alpha} + b^{ij\alpha}_k(\bmu) u^k_{x^\alpha}\big],\qquad \bmu = \big(u^1,\ldots,u^n\big),
\end{gather}
which in the non-degenerate case, $\det g^{ij\alpha}(\bmu)\neq 0$, can be reduced by a change of dependent variables to a linear form,\footnote{One-dimensional Dubrovin--Novikov operator ($D=1$) can be reduced to a constant form.} where
\begin{gather*}
		g^{ij\alpha}(\bmu) = \big(b^{ij\alpha}_k + b^{ji\alpha}_k\big)u^k + g^{ij\alpha}_0,\qquad b^{ij\alpha}_k(\bmu) = b^{ij\alpha}_k = \text{const},\qquad g^{ij\alpha}_0 = \text{const},
\end{gather*}
and $b^{ij\alpha}_k$ are structure constants of Novikov algebras that additionally satisfy some compatibility conditions, see \cite{Mokhov1,Mokhov2} and also~\cite{St2}.

The key result of this paper is the use of $(N+1)$-dimensional pencils
\begin{gather}\label{pencil}
\mcP^{ij}_\lambda = \mcP^{ij} + \mcP^{ij}_1 + \lambda \mcP^{ij}_0,\qquad \lambda\in\mathbb{C},
\end{gather}
that are central extension of the first-order homogeneous Hamiltonian operators~\eqref{DN}, for the construction of respective $(N+2)$-dimensional bi-Hamiltonian integrable hierarchies. The extension in \eqref{pencil} is defined through admissible cocycles:
\begin{gather*}
	\mcP^{ij}_\mu = g^{ij}_\mu \partial _x + f^{ij}_\mu \partial^2 _x + h^{ij}_\mu \partial _x^3
	+ \sum_{\alpha=1}^N\eta^{ij}_{\mu,\alpha} \partial _{y^\alpha},\qquad \mu=0,1,
\end{gather*}
where $N$ is the dimension of the linear space spanned by independent cocycles introducing additional spatial variables $y^\alpha$. In Theorem~\ref{centext} we derive the algebraic conditions that must be satisfied on a Novikov algebra by the first-order central extension defined with respect to an additional independent variable. Further, we show that the Killing forms must practically satisfy the same algebraic conditions. Subsequently, we classify such central extensions with respect to low-dimensional Novikov algebras, that have been originally classified in the articles \cite{BM1, BM2, BG} and those that can be extracted from the work~\cite{FLS}. Consequently, we show how to construct $(2+1)$- and, in principle, higher-dimensional bi-Hamiltonian hierarchies on the centrally extended Lie algebras associated to Novikov algebras. This construction is formulated with respect to a~Killing form. Finally, we illustrate the above theory by explicit examples of $(2+1)$- and $(3+1)$-dimensional (dispersive) integrable systems and related bi-Hamiltonian structures.

\section{Novikov algebras}

Let $(\mathbb{A},\cdot)$ be a Novikov algebra,\footnote{In this article all algebraic structures are considered over the field of complex numbers $\mathbb{C}$.} this means that it is right-commutative
\begin{gather*}
(\bma\cdot \bmb)\cdot \bmc = (\bm{a}\cdot \bmc)\cdot \bmb,
\end{gather*}
and left-symmetric (quasi-associative)
\begin{gather*}
(\bma\cdot \bmb)\cdot \bmc - \bma\cdot(\bmb\cdot \bmc) = (\bmb\cdot \bma)\cdot \bmc - \bmb\cdot(\bma\cdot \bmc).
\end{gather*}
The structure constants of a Novikov algebra $\mathbb{A}$, with basis vectors $\bme_1,\ldots,\bme_n$, are given by $b^i_{jk}$ such that
\begin{gather*}
\bme_i\cdot \bme_j = \sum_k b^k_{ij}\bme_k,\qquad \mathcal{B} \equiv \bigg(\sum_k b^k_{ij}\bme_k\bigg),
\end{gather*}
where $\mathcal{B}$ is the characteristic matrix.

By $\mathscr{L}_\mathbb{A}$ we understand the algebra of $\mathbb{A}$-valued (smooth) functions
of (independent) spatial variables $x,y_1,\ldots,y_N$ belonging to the domain $\Omega = \mathbb{S}^1\times\dots\times\mathbb{S}^1$,
\begin{gather*}
	\mathscr{L}_\mathbb{A} := \{\bmu\colon \Omega\rightarrow \mathbb{A} \,|\, \text{$\bmu$ is smooth}\}.
\end{gather*}
As the regular dual algebra $\mathscr{L}_\mathbb{A}^*$ to $\mathscr{L}_\mathbb{A}$ we choose the vector space of $\mathbb{A}^*$-valued functions,
\begin{gather*}
	\mathscr{L}_\mathbb{A}^* := \{\bm{\xi}\colon \Omega\rightarrow \mathbb{A}^* \,|\, \text{$\bm{\xi}$ is smooth}\}.
\end{gather*}
The related duality pairing $\mathscr{L}_\mathbb{A}^*\times\mathscr{L}_\mathbb{A}\rightarrow\mathbb{C}$ is given by
\begin{gather*}
	\dual{\bm{\xi},\bmu} := \int_{\Omega}(\bm{\xi},\bmu)\dx,
\end{gather*}
where the integral is over all spatial variables, $\dx\equiv {\rm d}x{\rm d}y_1\cdots {\rm d}y_N$. By $(\,,\,)\colon \mathbb{A}^*\times\mathbb{A}\rightarrow\mathbb{C}$ we mean the natural pairing between $\mathbb{A}$ and its dual $\mathbb{A}^*$.

The Lie algebra structure on $\mathscr{L}_\mathbb{A}$ is defined, with respect to the distinguished variable $x$, by the bracket
\begin{gather}\label{lb}
	\bc{\bmu,\bmv} : = \bmu_x\cdot \bmv - \bmv_x\cdot \bmu,\qquad \bmu_x\equiv \frac{\partial\bmu}{\partial x}.
\end{gather}
In fact, \eqref{lb} is a Lie bracket on $\mathscr{L}_\mathbb{A}$ iff $(\mathbb{A},\cdot)$ is a Novikov algebra.

\section{Central extensions}

Consider a bilinear form $g\colon \mathbb{A}\times \mathbb{A} \rightarrow \mathbb{C}$, then we will say that:
\begin{itemize}\itemsep=0pt
\item the form satisfies the quasi-Frobenius condition if
\begin{gather*}
		g(\bma,\bmb\cdot \bmc) = g(\bma\cdot \bmc, \bmb),
\end{gather*}
\item the form satisfies the cyclic condition if
\begin{gather*}
		g(\bma,\bmb\cdot \bmc) + g(\bmb,\bmc\cdot \bma) + g(\bmc,\bma\cdot \bmb) = 0,
\end{gather*}
\item the form is totally symmetric if the trilinear form
\begin{gather*}
		c(\bma,\bmb,\bmc):= g(\bma\cdot \bmb,\bmc)
\end{gather*}
is symmetric with respect to all arguments.
\end{itemize}

We are interested in the central extensions of the Lie algebra $\mathscr{L}_\mathbb{A}$. This means that on the direct sum $\widehat{\mathscr{L}}_\mathbb{A} \equiv {\mathscr{L}_\mathbb{A}}\oplus\mathbb{C}$ there is a Lie bracket of the form
\begin{gather*}
		\bc{\bmu\oplus \alpha, \bmv\oplus\beta} := \bc{\bmu,\bmv}\oplus \omega(\bmu,\bmv),
\end{gather*}
where $\omega\colon {\mathscr{L}_\mathbb{A}}\times{\mathscr{L}_\mathbb{A}}\rightarrow \mathbb{C}$ is a $2$-cocycle. This means that $\omega$ is skew-symmetric,
\begin{gather*}
	\omega(\bmu,\bmv) = -\omega(\bmv,\bmu), 	
\end{gather*}
and it satisfies the cyclic condition
\begin{gather*}
	\omega\bra{\bc{\bmu,\bmv},\bmw} + \omega\bra{\bc{\bmv,\bmw},\bmu} + \omega\bra{\bc{\bmw,\bmu},\bmv} = 0.
\end{gather*}

The differential $2$-cocycles, yielding central extensions of the Lie algebras ${\mathscr{L}_\mathbb{A}}$ associated with Novikov algebras,
are generated by appropriate bilinear forms satisfying various algebraic conditions that were originally derived in \cite{BN},
see also~\cite{SS}:
\begin{itemize}\itemsep=0pt
\item A bilinear form $g\colon \mathbb{A}\times\mathbb{A}\rightarrow\mathbb{C}$ defines the first-order $2$-cocycle
\begin{subequations}\label{cocycles}
\begin{gather}\label{cocx}
		\omega(\bmu,\bmv) = \int_\Omega g(\bmu_x,\bmv) \dx
\end{gather}
iff $g$ is symmetric and satisfies the quasi-Frobenius condition.

\item A bilinear form $f\colon \mathbb{A}\times\mathbb{A}\rightarrow\mathbb{C}$ defines the second-order $2$-cocycle
\begin{gather}
		\omega(\bmu,\bmv) = \int_\Omega f(\bmu_{xx},\bmv) \dx
\end{gather}
iff $f$ is skew-symmetric and satisfies the quasi-Frobenius and cyclic conditions.

\item A bilinear form $h\colon \mathbb{A}\times\mathbb{A}\rightarrow\mathbb{C}$ defines the third-order $2$-cocycle
\begin{gather}\label{coc3x}
		\omega(\bmu,\bmv) = \int_\Omega h(\bmu_{xxx},\bmv) \dx
\end{gather}
\end{subequations}
iff $h$ is symmetric and the related trilinear form $c(\bma,\bmb,\bmc)\equiv h(\bma\cdot \bmb,\bmc)$ is totally symmetric.
\end{itemize}
There are no $2$-cocycles of higher order.

Our aim is to complete the above theory with $2$-cocycles by which one can introduce a new independent variable.

\begin{Theorem}\label{centext} A bilinear form $\eta\colon \mathbb{A}\times\mathbb{A}\rightarrow\mathbb{C}$ generates on $\mathscr{L}_{\mathbb{A}}$ the first-order $2$-cocycle:
\begin{gather}\label{cocy}
		\omega(\bmu,\bmv) := \int_\Omega\eta(\bmu_y,\bmv) \dx,
\end{gather}
defined with respect to an $($additional$)$ independent variable $y\in\mathbb{S}^1$, if and only if the form $\eta$ is symmetric, satisfies
the quasi-Frobenius and cyclic conditions.\footnote{These conditions can be directly obtained from the algebraic conditions for the multi-dimensional
Poisson brackets of Dubrovin--Novikov type obtained in \cite{Mokhov1,Mokhov2}, see also~\cite{St2}.}
\end{Theorem}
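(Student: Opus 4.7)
The statement is an \emph{if and only if} asserting that the functional $\omega$ from \eqref{cocy} satisfies the two defining properties of a $2$-cocycle on $\mathscr{L}_\mathbb{A}$ — skew-symmetry and the cyclic identity for the Lie bracket \eqref{lb} — precisely when $\eta$ is symmetric, quasi-Frobenius, and cyclic. In both directions the main tools are integration by parts in $y\in\mathbb{S}^1$, whose boundary contributions vanish by periodicity, and a standard density principle: a trilinear identity $\int_\Omega F(\bmu,\bmv,\bmw)\dx = 0$ required to hold for all smooth $\bmu,\bmv,\bmw\in\mathscr{L}_\mathbb{A}$ forces the corresponding pointwise algebraic identity on $\eta$, as one sees by specializing to $\bmu=\bma\,\varphi_1$, $\bmv=\bmb\,\varphi_2$, $\bmw=\bmc\,\varphi_3$ with generic scalar test functions $\varphi_i(x,y)$.

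\textbf{Skew-symmetry.} I would compute $\omega(\bmu,\bmv)+\omega(\bmv,\bmu)=\int_\Omega[\eta(\bmu_y,\bmv)+\eta(\bmv_y,\bmu)]\dx$ and integrate by parts in $y$ on the second summand, exploiting $\int_{\mathbb{S}^1}\partial_y[\eta(\bmv,\bmu)]\,{\rm d}y = 0$, to obtain $\int_\Omega[\eta(\bmu_y,\bmv)-\eta(\bmv,\bmu_y)]\dx$. Its vanishing for all $\bmu,\bmv$ is, by the density principle, equivalent to the symmetry of $\eta$.

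\textbf{Cocycle identity.} Substituting \eqref{lb} into $\omega(\bc{\bmu,\bmv},\bmw)$ and applying Leibniz to $\partial_y$ produces four trilinear terms per cyclic permutation, twelve in total. Six of them carry a mixed derivative such as $\bmu_{xy}$; each I eliminate via
\begin{gather*}
\int_{\mathbb{S}^1}\eta(\bmu_{xy}\cdot\bmv,\bmw)\,{\rm d}y = -\int_{\mathbb{S}^1}\eta(\bmu_x\cdot\bmv_y,\bmw)\,{\rm d}y - \int_{\mathbb{S}^1}\eta(\bmu_x\cdot\bmv,\bmw_y)\,{\rm d}y,
\end{gather*}
a direct consequence of $\int_{\mathbb{S}^1}\partial_y[\eta(\bmu_x\cdot\bmv,\bmw)]\,{\rm d}y = 0$. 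Using the symmetry of $\eta$ already established to merge like terms, the integrand reduces to a sum of triproducts in which exactly one argument carries an $x$-derivative and exactly one a $y$-derivative. Sorting the remaining terms by which variable carries the $y$-derivative and applying the density principle, the coefficient of each independent monomial type is identified as either the quasi-Frobenius defect $\eta(\bma\cdot\bmc,\bmb)-\eta(\bma,\bmb\cdot\bmc)$ or the cyclic defect $\eta(\bma,\bmb\cdot\bmc)+\eta(\bmb,\bmc\cdot\bma)+\eta(\bmc,\bma\cdot\bmb)$. Hence the cocycle identity for $\omega$ is equivalent to the vanishing of both defects, and the converse direction is obtained by reading the chain of equalities backwards.

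\textbf{Main obstacle.} The combinatorial heart of the argument is this cocycle step: the twelve residual terms must be regrouped so that each independent monomial type in the first derivatives $\bmu_x,\bmv_x,\bmw_x,\bmu_y,\bmv_y,\bmw_y$ carries a coefficient that is visibly one of the two named defects of $\eta$, rather than some nontrivial linear combination of both. This is more delicate than the analogous purely $x$-dependent cocycles already discussed in \cite{BN, SS}, because the $x$- and $y$-derivatives enter asymmetrically — only $x$-derivatives appear inside the Novikov product of \eqref{lb} — so apparently similar terms can be identified only by invoking the right-commutativity of $(\mathbb{A},\cdot)$. Once the two defects are cleanly isolated, their linear independence as trilinear forms on $\mathbb{A}$ yields the desired equivalence.
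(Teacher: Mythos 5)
Your skew-symmetry step is fine, and your reduction of the cyclic sum is correct as far as it goes: after the six mixed-derivative terms are removed by the $y$-integrations by parts, the cancellations leave exactly $\int_\Omega T\,\dx$ with
\begin{gather*}
T=-\eta(\bmu_x\cdot\bmv,\bmw_y)+\eta(\bmv_x\cdot\bmu,\bmw_y)-\eta(\bmv_x\cdot\bmw,\bmu_y)\\
{}+\eta(\bmw_x\cdot\bmv,\bmu_y)-\eta(\bmw_x\cdot\bmu,\bmv_y)+\eta(\bmu_x\cdot\bmw,\bmv_y).
\end{gather*}
The gap is in the extraction step you call the combinatorial heart. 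In this normal form each of the six derivative patterns ($\partial_x$ on one argument, $\partial_y$ on another, the third underived) occurs exactly once, with a \emph{single} product $\eta(\,\cdot\,,\cdot)$ as its coefficient, so neither the quasi-Frobenius nor the cyclic defect is visible here; moreover the six patterns are \emph{not} functionally independent under $\int_\Omega$, since two integration-by-parts identities of the type $\int_\Omega F_xG_y\,\dx=\int_\Omega F_yG_x\,\dx$ (with $F$ built from one argument and $G$ from the other two) survive. Treating the six patterns as independent would force $\eta(\bma\cdot\bmb,\bmc)=0$ for all $\bma,\bmb,\bmc$, contradicting the nondegenerate forms $\eta$ in Tables~\ref{tab1} and~\ref{tab2}; while the grouping you actually propose, by which variable carries the $y$-derivative, would force $\eta(\bma\cdot\bmb,\bmc)=\eta(\bmb\cdot\bma,\bmc)$, which already fails for $(N6)$ with $\kappa=-2$ and its admissible $\eta$. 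The appeal to right-commutativity cannot repair this: the integrand is linear in single products of $\mathbb{A}$, so no Novikov axiom enters the derivation of the conditions at all (the axioms are only needed for \eqref{lb} to be a Lie bracket, which is assumed).

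The correct bookkeeping, and the one the paper uses, is to integrate by parts so that one fixed argument (say $\bmw$) never carries a derivative. The surviving monomial types $\{\bmu_{xy},\bmv,\bmw\}$, $\{\bmv_{xy},\bmu,\bmw\}$, $\{\bmu_x,\bmv_y,\bmw\}$, $\{\bmv_x,\bmu_y,\bmw\}$ are genuinely independent (on Fourier modes they realize the independent monomials $m_1n_1$, $m_2n_2$, $m_1n_2$, $m_2n_1$), and collecting their coefficients gives precisely $\eta(\bma\cdot\bmb,\bmc)=\eta(\bmc\cdot\bmb,\bma)$ and $\eta(\bma\cdot\bmb,\bmc)+\eta(\bmc\cdot\bma,\bmb)+\eta(\bma\cdot\bmc,\bmb)=0$; combined with the symmetry of $\eta$ obtained from skew-symmetry of $\omega$, these are equivalent to the quasi-Frobenius and cyclic conditions, and since every step is an equivalence the ``if'' direction follows by reading the computation backwards. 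So either adopt this normal form or explicitly quotient your six-term expression by the two residual integration-by-parts relations; as written, your extraction step does not go through.
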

\begin{proof}Integrating by parts
\begin{gather*}
\omega(\bmu,\bmv) := -\int_{\Omega}\eta(\bmu,\bmv_y) \dx.
\end{gather*}
Then,
\begin{gather*}
\omega\bra{\bc{\bmu,\bmv},\bmw} + \omega\bra{\bc{\bmv,\bmw},\bmu} + \omega\bra{\bc{\bmw,\bmu},\bmv} \\
\quad{} =-\int_{\Omega}\big[ \eta(\bmu_x\cdot \bmv- \bmv_x\cdot \bmu,\bmw_y) + \eta(\bmv_x\cdot \bmw - \bmw_x\cdot \bmv,\bmu_y) +\eta(\bmw_x\cdot \bmu - \bmu_x\cdot \bmw,\bmv_y)\big]\dx\\
\quad{} =\int_{\Omega}\big [ \eta(\bmu_{xy}\cdot \bmv,\bmw) + \eta(\bmu_x\cdot \bmv_y,\bmw) - \eta(\bmv_{xy}\cdot \bmu,\bmw) -\eta(\bmv_x\cdot \bmu_y,\bmw) -\eta(\bmv_x\cdot \bmw,\bmu_y)\\
\quad\quad{}\! - \eta(\bmw\cdot \bmv_x,\bmu_y) - \eta(\bmw\cdot \bmv, \bmu_{xy}) + \eta(\bmw\cdot \bmu_x,\bmv_y) + \eta(\bmu\cdot \bmu,\bmv_{xy}) + \eta(\bmu_x\cdot \bmw,\bmv_y)\big] \dx = 0.
\end{gather*}
Collecting terms with respect to the functionally independent variables, for instance $\{\bmu_{xy},\bmv,\bmw\}$, we obtain the required conditions on the bilinear form~$\eta$.
\end{proof}

Notice that always a linear combination of $2$-cocyles is a $2$-cocycle. This means, that each linearly independent $2$-cocycle of the form \eqref{cocy} can be defined with respect to a different additional independent variable.

In the article \cite{BM1} the three- and less-dimensional Novikov algebras over complex numbers were fully classified. In arbitrary dimension the classification is far from being complete. For instance, in dimension four only transitive Novikov algebras were classified \cite{BM2, BG}. From the point of view of the construction of integrable hierarchies from Novikov algebras the transitive\footnote{A Novikov algebra is called transitive (or right-nilpotent) if every right multiplication is nilpotent.} algebras are not of interest as they result in 'degenerate' systems of evolution equations, see~\cite{SS}.

Recently, in the work \cite{FLS} a classification of non-degenerate two-dimensional Hamiltonian operators of Dubrovin--Novikov type \eqref{mDN} ($D=2$) with small number of components, $n\leqslant 4$, extending previous results from \cite{Mokhov1,Mokhov2} ($n=1,2$), has been completed. Since all $2$-dimensional Hamiltonian operators from that class can be reduced to a linear form the results of \cite{FLS} provides also an implicit classification of associated Novikov algebras for which there exists first-order central extensions \eqref{cocx} and \eqref{cocy} defined by non-degenerate bilinear forms $g$ and $\eta$. Taking advantage of the classification presented in \cite{FLS} we have extracted in Table~\ref{tab2} all the respective unique $4$-dimensional non-transitive Novikov algebras.

For a given Novikov algebra classification of the bilinear forms generating differential central extensions is rather straightforward as it involves only solving
the systems of linear equations, that can be done without much difficulty using any software for symbolic computations. The classification of differential $2$-cocycles, with respect to variable $x$, associated with the low-dimensional Novikov algebras was presented in the work~\cite{SS}. Here, we extend this classification over the case including differential $2$-cocycles of the form~\eqref{cocy}, defined with respect to an additional independent variable. The results are summarized in the following theorem and Tables~\ref{tab1} and~\ref{tab2}. We in fact are only interested with the cases of~\eqref{cocy} defined by means of non-degenerate~$\eta$.

\begin{Theorem}\label{classif} We consider here only Novikov algebras $\mathbb{A}$ that cannot be decomposed into direct sum of lower-dimensional Novikov algebras.
\begin{itemize}\itemsep=0pt
\item In dimension $1$ there is only one relevant Novikov algebra, $\mathbb{A}=\mathbb{C}$, and in this case there is no central extension of the form~\eqref{cocy}.
\item In dimension $2$ there is only one Novikov algebra, $(N6)$ with $\kappa=-2$, for which there exists the central extension~\eqref{cocy} defined by a non-degenerate bilinear form $\eta$.
\item In dimension $3$ there are only two Novikov algebras, $(A7)$ and $(D6)$ both with $\kappa=-2$, for which there exists the central extension \eqref{cocy} defined by a non-degenerate $\eta$.
\item In dimension $4$ there are several Novikov algebras for which there exists the central extension \eqref{cocy} defined by a non-degenerate $\eta$.
\item Up to dimension $4$ there is only one transitive Novikov algebra, $(N3)\otimes (N6)$ with $\kappa=-2$, for which the dimension of the linear space spanned by central extensions \eqref{cocy} is higher than $1$.
\end{itemize}
\end{Theorem}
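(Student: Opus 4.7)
The plan is to work case-by-case through the existing classifications of indecomposable Novikov algebras of dimension $\leq 4$: dimensions $1$--$3$ from \cite{BM1}, the transitive four-dimensional algebras from \cite{BM2,BG}, and the non-transitive four-dimensional algebras extracted from \cite{FLS} and summarized in Table~\ref{tab2}. For each such algebra, with basis $\bme_1,\ldots,\bme_n$ and structure constants $b^k_{ij}$, set $\eta_{ij}:=\eta(\bme_i,\bme_j)$; the conditions of Theorem~\ref{centext} then translate into the finite homogeneous linear system
\begin{gather*}
\eta_{ij}=\eta_{ji},\qquad
\sum_k\bra{b^k_{jl}\eta_{ik} - b^k_{il}\eta_{kj}} = 0,\qquad
\sum_k \bra{b^k_{jl}\eta_{ik} + b^k_{li}\eta_{jk} + b^k_{ij}\eta_{lk}} = 0,
\end{gather*}
on the $n(n+1)/2$ unknowns $\eta_{ij}$. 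For families depending on a parameter (such as $\kappa$ in $(N6)$, $(A7)$, $(D6)$), the coefficients are polynomial in that parameter, so the analysis naturally splits into a generic branch and finitely many specialization branches identified by the vanishing of appropriate minors.

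The dimension-one case is immediate: the only relevant Novikov algebra, $\mathbb{A}=\mathbb{C}$ with $\bme_1\cdot\bme_1=\bme_1$, already fails the cyclic condition, since it forces $3\,\eta(\bme_1,\bme_1)=0$ and so $\eta=0$. In dimensions $2$ and $3$ the classification of \cite{BM1} contains only a short list of families, for each of which the linear system above is small enough to be solved by hand or in a few lines of symbolic computation. Imposing $\det(\eta_{ij})\neq 0$ isolates precisely $(N6)|_{\kappa=-2}$ in dimension~$2$ and $(A7)|_{\kappa=-2}$, $(D6)|_{\kappa=-2}$ in dimension~$3$, and the explicit non-degenerate~$\eta$ in each case is entered into Table~\ref{tab1}, proving the second and third bullets.

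The main obstacle is dimension four. It is not conceptual---the same linear system is set up and solved---but a matter of bookkeeping, because the list of indecomposable algebras is considerably longer and several of them carry continuous parameters that must be specialized separately. For each algebra in turn one computes the solution space of the system, tests non-degeneracy via $\det(\eta_{ij})$, and records the outcome in Tables~\ref{tab1} and~\ref{tab2}; this yields the fourth bullet. The final bullet then follows as a by-product: comparing, over all algebras treated above, the dimension of the solution space of the system shows that only $(N3)\otimes(N6)|_{\kappa=-2}$ has nullity strictly greater than one, which is confirmed by exhibiting two linearly independent non-degenerate forms $\eta$ in that case.
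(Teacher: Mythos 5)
Your proposal matches the paper's own justification essentially verbatim: the paper proves Theorem~\ref{classif} by translating the symmetry, quasi-Frobenius and cyclic conditions of Theorem~\ref{centext} into finite homogeneous linear systems for $\eta_{ij}$ and solving them case by case (by symbolic computation) over the classifications of \cite{BM1,BM2,BG,FLS}, recording the outcomes and the non-degeneracy check in Tables~\ref{tab1} and~\ref{tab2}. Your coordinate form of the linear system and the parameter-branching for $\kappa$ are exactly what that computation amounts to, so the approach is the same.
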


\begin{Remark}In the work \cite{FLS} it is showed that any two-dimensional irreducible nonconstant Hamiltonian operator of Dubrovin--Novikov type \eqref{mDN} ($D=2$) with three-components ($n=3$) can be reduced to one of two canonical forms. These forms correspond respectively to Novikov algebras $(A7)$ and $(D6)$ both with $\kappa=-2$. By another result from \cite{FLS} any non-degenerated three-dimensional irreducible Hamiltonian operator \eqref{mDN} $(D=3)$ with three-components, which is not transformable to constant coefficients, can be brought to one canonical form. This form corresponds to the transitive algebra $(A7)$ with $\kappa=-2$.
\end{Remark}

\section{Killing form}

The Killing form on ${\mathscr{L}_\mathbb{A}}$ is a symmetric non-degenerate bilinear form $\mcK\colon {\mathscr{L}_\mathbb{A}}\times{\mathscr{L}_\mathbb{A}}\rightarrow\mathbb{C}$ which is $ad$-invariant:
\begin{gather*}
		\mcK\bra{\bc{\bmu,\bmw},\bmv} + \mcK\bra{\bmw,\bc{\bmu,\bmv}} =0 .
\end{gather*}
Assume that $\mcK$ is defined by a bilinear form $\kappa\colon \mathbb{A}\times\mathbb{A}\rightarrow\mathbb{C}$:
\begin{gather}\label{killing}
		\mcK\bra{\bmu,\bmv} := \int_{\Omega}\kappa(\bmu,\bmv) \dx\equiv \dual{\hat{\kappa} \bmu, \bmv},
\end{gather}
where $\hat{\kappa}\colon \mathbb{A}\rightarrow\mathbb{A}^*$. If there exists a Killing form then the dual Lie algebra ${\mathscr{L}_\mathbb{A}}^*$ can be identified
with ${\mathscr{L}_\mathbb{A}}$ and the co-adjoint action can be identified with the adjoint action:
\begin{gather*}
	\ad^*_\bmu\bm{\xi}\equiv \hat{\kappa}\bc{\bmu,\bmw},\qquad \bm{\xi} \equiv \hat{\kappa}\bmw,\qquad \dual{\ad_\bmu^*\bm{\xi},\bmv}:=-\dual{\bm{\xi},\bc{\bmu,\bmv}},
\end{gather*}
where $\bm{\xi}\in{\mathscr{L}_\mathbb{A}}^*$.

\begin{Theorem} A symmetric non-degenerate bilinear form $\kappa\colon \mathbb{A}\times\mathbb{A}\rightarrow\mathbb{C}$ generates the Killing form \eqref{killing} if and only if the form $\kappa$ satisfies the cyclic and quasi-Frobenius conditions.
\end{Theorem}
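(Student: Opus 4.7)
The plan is to translate the ad-invariance condition $\mcK\bra{\bc{\bmu,\bmw},\bmv}+\mcK\bra{\bmw,\bc{\bmu,\bmv}}=0$ into algebraic identities on $\kappa$. Substituting the bracket $\bc{\bmu,\bmv}=\bmu_x\cdot\bmv-\bmv_x\cdot\bmu$ yields the integrand
\[
L=\kappa(\bmu_x\cdot\bmw,\bmv)-\kappa(\bmw_x\cdot\bmu,\bmv)+\kappa(\bmw,\bmu_x\cdot\bmv)-\kappa(\bmw,\bmv_x\cdot\bmu),
\]
trilinear in $(\bmu,\bmv,\bmw)$ and first-order in $x$-derivatives. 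Its integral over $\Omega$ vanishes for all periodic smooth $\bmu,\bmv,\bmw$ if and only if $L=\partial_x G$ for some trilinear function $G=G_{ijl}u^iv^jw^l$, since $L$ has no $y$-derivatives and is homogeneous of degree one in $x$-derivatives.

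Fixing a basis $\bme_i$ and introducing the tensor $c_{ijl}:=\kappa(\bme_i\cdot\bme_j,\bme_l)$, I would use the symmetry of $\kappa$ to rewrite all four terms of $L$ in terms of $c$ alone, and then match the coefficients of the three independent monomial classes $u^i_xv^jw^l$, $v^i_xu^jw^l$, $w^i_xu^jv^l$ appearing in $L$ and in $\partial_x G$. This produces three formulas for the same tensor entry:
\[
G_{pqr}=c_{pqr}+c_{prq},\qquad G_{pqr}=-c_{qpr},\qquad G_{pqr}=-c_{rpq}.
\]
Pairwise consistency of these formulas yields two tensor identities on $c$. Equating the last two gives $c_{qpr}=c_{rpq}$, which after relabeling says that $c$ is symmetric in its first and third indices and is precisely the quasi-Frobenius condition. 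Equating either of the first two with one of the remaining formulas, and using quasi-Frobenius to reorder indices, produces the cyclic sum $c_{pqr}+c_{qrp}+c_{rpq}=0$, which is the cyclic condition. The converse is then immediate: given that $\kappa$ is symmetric, quasi-Frobenius and cyclic, the three expressions for $G_{pqr}$ coincide, $G$ is well defined, one verifies $L=\partial_x G$ pointwise, and the integral vanishes.

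The main bookkeeping obstacle is remembering to check all three pairwise consistency relations among the formulas for $G_{pqr}$; in particular the equality between the $v_x$- and $w_x$-matching expressions is easy to overlook and is exactly what produces the quasi-Frobenius half of the conclusion, while the other comparisons yield the cyclic half. Non-degeneracy of $\kappa$ plays no role in the ad-invariance identity itself; it is needed only to identify $\mscrL_\mathbb{A}^*\cong\mscrL_\mathbb{A}$, and is thus part of the definition of the Killing form rather than an algebraic constraint derived from ad-invariance.
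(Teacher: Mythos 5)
Your proposal is correct and is essentially the paper's own argument: both reduce the ad-invariance identity to linear conditions on the tensor $\kappa(\bme_i\cdot\bme_j,\bme_l)$ and then use the symmetry of $\kappa$ to recognize these as the quasi-Frobenius and cyclic conditions. The only difference is cosmetic: the paper integrates by parts to eliminate $\bmw_x$ and collects the remaining functionally independent terms (two conditions), whereas you match the untouched integrand against an exact $x$-derivative of a trilinear potential $G$ (three consistency relations), which amounts to the same coefficient bookkeeping and yields the same conclusion, including the observation that non-degeneracy is not used in the derivation itself.
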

\begin{proof}We have
\begin{gather*}
\mcK\bra{\bc{\bmu,\bmw},\bmu} + \mcK\bra{\bmw,\bc{\bmu,\bmv}} \\
\quad{} = \int_{\Omega}\big[\kappa(\bmu_x\cdot \bmw - \bmw_x\cdot \bmu, \bmv) + \kappa(\bmw,\bmu_x\cdot \bmv - \bmv_x\cdot \bmu)\big]\dx\\
\quad{}= \int_{\Omega}\big[\kappa(\bmu_x\cdot \bmw, \bmv) + \kappa(\bmw\cdot \bmu_x, \bmv) + \kappa(\bmw\cdot \bmu, \bmv_x) + \kappa(\bmw,\bmu_x\cdot \bmv) - \kappa(\bmw, \bmv_x\cdot \bmu)\big] \dx =0.
\end{gather*}
Collecting terms with respect to functionally independent variables we obtain two conditions
\begin{gather*}
	\kappa(\bma\cdot \bmc, \bmb) + \kappa(\bmc\cdot \bma, \bmb) + \kappa(\bmc,\bma\cdot \bmb) = 0
\end{gather*}
and
\begin{gather*}
	\kappa(\bmc\cdot \bma, \bmb) = \kappa(\bmc, \bmb\cdot \bma).
\end{gather*}
Taking into account the symmetry of the form $\kappa$ we get the conditions from the theorem.
\end{proof}

We see that the algebraic conditions for the Killing forms \eqref{killing} are the same as for the $2$-cocycles~\eqref{cocy} with non-degenerate $\eta$. This means that on a Novikov algebra, on which there exists a `non-degenerate' $2$-cocycle \eqref{cocy}, one can always define the Killing form~\eqref{killing}.

In the following lemma we rewrite the algebraic conditions that must be satisfied by differential central extensions with respect to a given Killing form.

\begin{Lemma}\label{linc} We assume that the Novikov algebra $\mathbb{A}$ is such that there on ${\mathscr{L}_\mathbb{A}}$ exists the associated Killing form~\eqref{killing}.
Then
\begin{itemize} \itemsep=0pt
\item a linear form $\hat{g}\colon \mathbb{A}\rightarrow\mathbb{A}$ generates the two-cocycle of first order,
\begin{gather*}
		\omega(\bmu,\bmv) = \int_\Omega g(\bmu_x,\bmv) \dx \equiv \mcK\big(\hat{g}\bmu_x,\bmv\big),
\end{gather*}
iff the following $($equivalent$)$ relations hold on $\mathbb{A}$
\begin{gather*}
		\hat{g}(\bma\cdot\bmb) = (\hat{g}\bma)\cdot\bmb
\end{gather*}
and
\begin{gather*}
		(\hat{g}\bma)\cdot\bmb - \bma\cdot (\hat{g}\bmb) = (\hat{g}\bmb)\cdot\bma - \bmb\cdot (\hat{g}\bma);
\end{gather*}
\item a linear form $\hat{f}\colon \mathbb{A}\rightarrow\mathbb{A}$ generates the two-cocycle of second order,
\begin{gather*}
		\omega(\bmu,\bmv) = \int_\Omega f(\bmu_{xx},\bmv) \dx \equiv \mcK\big(\hat{f}\bmu_{xx},\bmv\big),
\end{gather*}
iff on $\mathbb{A}$ the following two conditions are satisfied,
\begin{gather*}
		\hat{f}(\bma\cdot\bmb) = \big(\hat{f}\bma\big)\cdot\bmb
\end{gather*}
and
\begin{gather*}
		\hat{f}(\bma\cdot\bmb - \bmb\cdot \bma) = \bma\cdot\big(\hat{f}\bmb\big) + \big(\hat{f}\bmb\big)\cdot\bma;
\end{gather*}
\item a linear form $\hat{h}\colon \mathbb{A}\rightarrow\mathbb{A}$ generates the two-cocycle of third order,
\begin{gather*}
		\omega(\bmu,\bmv) = \int_\Omega h(\bmu_{xxx},\bmv) \dx \equiv \mcK\bra{\hat{h}\bmu_{xxx},\bmv},
\end{gather*}
iff the relations
\begin{gather*}
		\hat{h}(\bma\cdot\bmb) = \hat{h}(\bmb\cdot\bma) = \big(\hat{h}\bma\big)\cdot\bmb = \big(\hat{h}\bmb\big)\cdot\bma
\end{gather*}
and
\begin{gather*}
		\hat{h}(\bma\cdot\bmb) = -\frac{1}{2} \bma\cdot\big(\hat{h}\bmb\big)
\end{gather*}
are valid on $\mathbb{A}$;
\item a linear form $\hat{\eta}\colon \mathbb{A}\rightarrow\mathbb{A}$ generates the two-cocycle of first order with respect to an additional independent variable $y\in\mathbb{S}^1$,
\begin{gather*}
		\omega(\bmu,\bmv) = \int_\Omega \eta(\bmu_{y},\bmv) \dx \equiv \mcK\big(\hat{\eta}\bmu_{y},\bmv\big),
\end{gather*}
iff
\begin{gather*}
		\hat{\eta}(\bma\cdot\bmb) = \big(\hat{\eta}\bma\big)\cdot\bmb
\end{gather*}
and
\begin{gather*}
		(\hat{\eta}\bma)\cdot\bmb = \bma\cdot\big(\hat{\eta}\bmb\big)
\end{gather*}
hold on $\mathbb{A}$.
\end{itemize}
\end{Lemma}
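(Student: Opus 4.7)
The plan is to treat all four bullets with a single template, flagging where extra work is needed. Via the non-degenerate identification $g(\bma,\bmb)=\kappa(\hat g\bma,\bmb)$ (and similarly for $\hat f,\hat h,\hat\eta$), every condition on the bilinear forms coming from Theorem~\ref{centext} and the earlier list becomes an identity in $\kappa$. Such identities can then be reshuffled using the three properties of $\kappa$ supplied by the preceding theorem---symmetry, quasi-Frobenius, and cyclic---and non-degeneracy of $\kappa$ is invoked at the end to strip a test element, leaving an intrinsic identity on $\mathbb{A}$.

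The first displayed relation in each bullet is obtained uniformly. For each of $\hat g,\hat f,\hat h,\hat\eta$, the quasi-Frobenius condition on the bilinear form reads $\kappa(\hat X\bma,\bmb\cdot\bmc)=\kappa(\hat X(\bma\cdot\bmc),\bmb)$; applying quasi-Frobenius of $\kappa$ to the left-hand side and stripping $\bmb$ by non-degeneracy yields the Leibniz-type identity $\hat X(\bma\cdot\bmb)=(\hat X\bma)\cdot\bmb$. In the third bullet the permutation of $c(\bma,\bmb,\bmc)=h(\bma\cdot\bmb,\bmc)$ interchanging the first two arguments gives additionally $\hat h(\bma\cdot\bmb)=\hat h(\bmb\cdot\bma)$, and combining these produces all four equalities of the displayed chain.

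The second relation is obtained by substituting the operator inside the cyclic identity for $\kappa$ and combining with the remaining structural condition on the bilinear form. For the fourth bullet, substituting $\bmc\mapsto\hat\eta\bmc$ in the cyclic of $\kappa$, using self-adjointness of $\hat\eta$ (the translation of symmetry of $\eta$) and the Leibniz identity, and comparing with the cyclic condition on $\eta$, yields after subtraction and non-degeneracy exactly $(\hat\eta\bma)\cdot\bmb=\bma\cdot(\hat\eta\bmb)$. The first and second bullets are treated analogously: symmetry of $g$ (resp.\ skew-symmetry of $f$) translates to self-adjointness (resp.\ anti-self-adjointness) of $\hat g$ (resp.\ $\hat f$); after a substitution $\bmb\mapsto\hat g\bmb$ (resp.\ $\bma\mapsto\hat f\bma$) in the cyclic of $\kappa$, the appropriate swap $\bma\leftrightarrow\bmb$ and cancellation (using Leibniz and (anti-)self-adjointness) rearranges, after stripping a test element, into the stated second condition.

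The main obstacle is the factor $-\tfrac12$ in the third bullet, which requires the cyclic identity for $\kappa$ in an essential way. Substituting $\bma\mapsto\hat h\bma$ into the cyclic of $\kappa$ and applying self-adjointness together with the full symmetry $(\hat h\bma)\cdot\bmb=(\hat h\bmb)\cdot\bma$ from step one forces two of the three resulting terms to coincide with $\kappa(\bmb,(\hat h\bma)\cdot\bmc)$, so the identity collapses to $2\kappa(\bmb,(\hat h\bma)\cdot\bmc)+\kappa(\bmb,\bmc\cdot(\hat h\bma))=0$; non-degeneracy then gives $2(\hat h\bma)\cdot\bmc+\bmc\cdot(\hat h\bma)=0$, and renaming arguments combined with the Leibniz identity produces $\hat h(\bma\cdot\bmb)=-\tfrac12\bma\cdot(\hat h\bmb)$. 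In each bullet the converse direction is immediate: each algebraic identity, once inserted into $\kappa$, recovers the asserted property of the bilinear form.
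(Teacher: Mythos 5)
The paper itself gives no argument here (the proof is explicitly skipped as ``straightforward''), so your computations are the content to judge, and the forward direction of your proposal is sound. Translating quasi-Frobenius of each form through quasi-Frobenius of $\kappa$ and stripping the test element does give $\hat{X}(\bma\cdot\bmb)=(\hat{X}\bma)\cdot\bmb$ in all four cases (for $\hat{h}$ you should add the one-line remark that quasi-Frobenius of $h$ is not among the listed conditions but follows from symmetry of $h$ together with total symmetry of $c$). I checked your key manipulation for the factor $-\frac{1}{2}$: with the triple $(\hat{h}\bma,\bmb,\bmc)$ in the cyclic identity of $\kappa$, the first and third terms indeed both reduce to $\kappa(\bmb,(\hat{h}\bma)\cdot\bmc)$, giving $2(\hat{h}\bma)\cdot\bmc+\bmc\cdot(\hat{h}\bma)=0$ and hence $\hat{h}(\bma\cdot\bmb)=-\frac{1}{2}\bma\cdot(\hat{h}\bmb)$; the analogous derivations of the second relations for $\hat{g}$, $\hat{f}$, $\hat{\eta}$ from (anti-)self-adjointness, the Leibniz identity and the cyclic conditions also go through.

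The gap is in your last sentence: the converse is not ``immediate''. Reversing your computations recovers the quasi-Frobenius and cyclic conditions on the bilinear form, but not its symmetry (resp.\ skew-symmetry for $f$): inserting, for instance, $(\hat{\eta}\bma)\cdot\bmb=\bma\cdot(\hat{\eta}\bmb)$ together with the Leibniz rule into $\kappa$ only yields $\kappa(\hat{\eta}\bma,\bmb\cdot\bmc)=\kappa(\bma,\hat{\eta}(\bmb\cdot\bmc))$, i.e.\ self-adjointness of $\hat{\eta}$ tested against $\mathbb{A}\cdot\mathbb{A}$ only. When $\mathbb{A}\cdot\mathbb{A}\neq\mathbb{A}$ this is strictly weaker than symmetry of $\eta$, which is genuinely needed for skewness of $\omega$; e.g.\ for the Novikov algebra with identically zero product (which admits Killing forms, since any symmetric non-degenerate $\kappa$ trivially satisfies the cyclic and quasi-Frobenius conditions) all four operator identities are vacuous, yet a non-self-adjoint $\hat{\eta}$ does not generate a $2$-cocycle. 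To close the converse you must either build the symmetry/skew-symmetry of $g$, $f$, $h$, $\eta$ into the correspondence between forms and hatted operators, or add the hypothesis $\mathbb{A}=\mathbb{A}\cdot\mathbb{A}$ (for example existence of the right unity $\bme$, which holds for all algebras the paper actually uses), under which self-adjointness on products suffices. State explicitly which reading you adopt; as literally written your final claim, and with it the ``iff'', needs this caveat.
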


We skip the proof as it is rather straightforward and it does not add anything significant for further considerations.

\section{Bi-Hamiltonian structure}\label{sect}

Consider the centrally extended Lie--Poisson bracket, associated with the Lie algebra ${\mathscr{L}_\mathbb{A}}$, and defined with respect to the Killing form \eqref{killing}:
\begin{gather}\label{poiss1}
		\pobr{\mcH,\mcF}_1(\bmu) = \mcK\bra{\bmu,\bc{\delta_\bmu\mcH, \delta_\bmu\mcF}} + \omega_1\bra{\delta_\bmu\mcH,\delta_\bmu\mcF},
\end{gather}
where $\bmu\in{\mathscr{L}_\mathbb{A}}$ and $\mcH,\mcF\in\mscrF\bra{{\mathscr{L}_\mathbb{A}}}$. Here, $\mscrF\bra{{\mathscr{L}_\mathbb{A}}}$ is the space of functionals on the Lie algebra~${\mathscr{L}_\mathbb{A}}$:
\begin{gather*}
		\mcH(\bmu) = \int_\Omega H[\bmu] \dx,
\end{gather*}
where the densities $H[\bmu]$ are (non-local) smooth functions with respect to all variables\footnote{Here, $\partial_x^{-1}$ is a formal inverse of $\partial_x$ and its definition depends on the analytic assumptions which are not significant from the point of view of the algebraic formalism.}
\begin{gather*}
[\bmu]\equiv \big\{\bmu,\bmu_x,\bmu_y,\bmu_{xx}, \bmu_{xy},\ldots, \partial_x^{-1}\bmu_y,\ldots\big\}.
\end{gather*}
The respective (variational) differentials $\delta_\bmu\mcH$ are defined through the standard formula
\begin{gather}\label{diff}
		\mcK\bra{\delta_\bmu\mcH,\bmv} := \frac{{\rm d}}{{\rm d}\varepsilon}\mcH(\bmu+\varepsilon\bmv)\Bigr |_{\varepsilon = 0},\qquad \bmu,\bmv\in{\mathscr{L}_\mathbb{A}}.	
\end{gather}
The second Poisson bracket compatible with \eqref{poiss1} is
\begin{gather}\label{poiss0}
		\pobr{\mcH,\mcF}_0(\bmu) = \omega_0\bra{\delta_\bmu\mcH,\delta_\bmu\mcF}.
\end{gather}
In the formulas \eqref{poiss1} and \eqref{poiss0} $\omega_1$ and $\omega_0$ are linear compositions of admissible
$2$-cocyles. Naturally, the Poisson brackets \eqref{poiss1} and \eqref{poiss0} are compatible since linear composition of $\omega_1$ and $\omega_0$ is also a $2$-cocycle.

The related Poisson tensors $\mcP_i\colon {\mathscr{L}_\mathbb{A}}\rightarrow{\mathscr{L}_\mathbb{A}}$, defined with respect to the Killing form \eqref{killing},
\begin{gather*}
		\pobr{\mcH,\mcF}_i(\bmu) \equiv \mcK\bra{\mcP_i \delta_\bmu\mcH, \delta_\bmu\mcF},\qquad i=1,2,
\end{gather*}
are
\begin{subequations}\label{pt}
\begin{gather}\label{pt1}
		\mcP_1\bm{\gamma} = \bc{\bmu,\bm{\gamma}}+ \mathcal{I}_1 \bm{\gamma}_x
\end{gather}
and
\begin{gather}\label{pt2}
	\mcP_0\bm{\gamma} =\mathcal{I}_0 \bm{\gamma}_x.
\end{gather}
\end{subequations}
The linear operators $\mcI_i\colon {\mathscr{L}_\mathbb{A}}\rightarrow{\mathscr{L}_\mathbb{A}}$, such that
\begin{gather*}
		\mcK\bra{\mcI_i\bm{\gamma}_x, \bmv} := \omega_i\bra{\bm{\gamma},\bmv},\qquad i=0,1,
\end{gather*}
have the form
\begin{subequations}\label{inop}
\begin{gather}\label{inop0}
		\mcI_0 = \hat{g}_0 + \hat{f}_0\partial_x + \hat{h}_0\partial_x^2 + \sum_{k=1}^N\hat{\eta}_{0,k}\partial_x^{-1}\partial_{y_k},
\end{gather}
and
\begin{gather}\label{inop1}
		\mcI_1 = \hat{g}_1 + \hat{f}_1\partial_x + \hat{h}_1\partial_x^2 + \sum_{k=1}^N\hat{\eta}_{1,k}\partial_x^{-1}\partial_{y_k}.
\end{gather}
\end{subequations}
The forms $\hat{g}_i$, $\hat{f}_i$, $\hat{h}_i$ and $\hat{\eta}_{i,k}$ generate the respective $2$-cocycles, see Lemma~\ref{linc}, and $N$ is the dimension of the
linear space spanned by $2$-cocycles of type~\eqref{cocy} introducing additional spatial variables.

The related (infinite) bi-Hamiltonian chain has the form
\begin{gather}
\bmu_{t_0} = \mcP_0\delta_\bmu\mcH_0\equiv 0,\nonumber\\
\bmu_{t_1} = \mcP_1\delta_\bmu\mcH_0=\mcP_0\delta_\bmu\mcH_1,\nonumber\\
\bmu_{t_2} = \mcP_1\delta_\bmu\mcH_1=\mcP_0\delta_\bmu\mcH_2,\label{hier}\\
\cdots\cdots\cdots\cdots\cdots\cdots\cdots\cdots\cdots , \nonumber
\end{gather}
where $\mcH_i\in\mscrF({\mathscr{L}_\mathbb{A}})$ and $\mcH_0$ is a Casimir of~$\mcP_0$. Such bi-Hamiltonian chain exists since $\mcP_0$ is a~constant Poisson operator, which always has non-trivial kernel. The evolution equations from the hierarchy are of~$(N+2)$ dimension: with respect to an evolution variable $t_i$ and $N+1$ spatial variables $x,y_1,\ldots,y_N$.

Assuming invertibility of $\mcI_0\colon {\mathscr{L}_\mathbb{A}}\rightarrow{\mathscr{L}_\mathbb{A}}$ we can introduce the auxiliary dependent variable~$\bbu$ such that
\begin{gather}\label{inerrel}
\bbu := \mcI_0^{-1} \bmu\quad\iff\quad \bmu = \mcI_0\bbu.
\end{gather}
Here, we understand the invertibility of the operator $\mcI_0$ as of the pseudo-differential operators.

\begin{Theorem}We assume that the Novikov algebra $\mathbb{A}$ possesses the right unity $\bme$. Assuming that $\delta_\bmu\mcH_0 = \bm{e}$, the first two nontrivial evolution equations from the hierarchy~\eqref{hier} takes the form
\begin{gather}	
\bmu_{t_1} = \bmu_x,\nonumber\\
\bmu_{t_2} = \bc{\bmu,\bbu} + \mathcal{I}_1 \bbu_x.\label{inthier}
\end{gather}
The respective Hamiltonians are
\begin{gather}
\mcH_0 = \mcK\bra{\bm{e},\bmu},\nonumber\\
\mcH_1 = \frac{1}{2}\mcK\bra{\bmu,\bbu},\nonumber\\
\mcH_2 = \frac{1}{3}\mcK\bra{\bc{\bbu,\bmu},\partial_x^{-1}\bbu} + \frac{1}{2}\mcK\bra{\mcI_1\bbu,\bbu}.\label{hhier}
\end{gather}
\end{Theorem}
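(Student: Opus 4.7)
The plan is to verify each claimed identity by direct computation using the explicit form of $\mcP_0,\mcP_1$ in \eqref{pt}, the defining relation \eqref{inerrel} of $\bbu$, and the non-degeneracy of $\mcK$. First, since $\mcH_0 = \mcK(\bme,\bmu)$ is linear in $\bmu$, formula \eqref{diff} gives $\mcK(\delta_\bmu\mcH_0,\bmv) = \mcK(\bme,\bmv)$ for all $\bmv$, hence $\delta_\bmu\mcH_0 = \bme$ as stated. Inserting this into $\mcP_1$, using that $\bme$ does not depend on $x$ and that $\bme$ is a right unity,
\begin{gather*}
\bmu_{t_1} = \mcP_1\bme = \bc{\bmu,\bme} + \mcI_1 \bme_x = \bmu_x\cdot\bme - \bme_x\cdot\bmu = \bmu_x,
\end{gather*}
which is the first nontrivial flow.

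Next I would establish the symmetry of $\mcI_0$ (and of $\mcI_1$) with respect to $\mcK$: using Lemma~\ref{linc} together with integration by parts, each summand $\hat{g}_i$, $\hat{f}_i\partial_x$, $\hat{h}_i\partial_x^2$, $\hat{\eta}_{i,k}\partial_x^{-1}\partial_{y_k}$ is formally self-adjoint, by (respectively) the symmetry of $g_i$, the skew-symmetry of $f_i$, the symmetry of $h_i$ and the symmetry of $\eta_{i,k}$. Consequently $\mcI_0^{-1}$ is self-adjoint too, and varying $\mcH_1 = \tfrac12 \mcK(\bmu,\mcI_0^{-1}\bmu)$ gives $\mcK(\delta_\bmu\mcH_1,\bmv) = \mcK(\mcI_0^{-1}\bmu,\bmv) = \mcK(\bbu,\bmv)$, so $\delta_\bmu\mcH_1 = \bbu$. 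Applying $\mcP_1$ yields $\bmu_{t_2} = \bc{\bmu,\bbu} + \mcI_1\bbu_x$. As a consistency check, since $\mcI_0$ has constant coefficients, $\mcP_0\bbu = \mcI_0\bbu_x = (\mcI_0\bbu)_x = \bmu_x$, confirming the crossing relation $\mcP_0\delta_\bmu\mcH_1 = \mcP_1\delta_\bmu\mcH_0$.

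The main obstacle is the verification of the cubic Hamiltonian $\mcH_2$, i.e.\ showing $\mcP_0\delta_\bmu\mcH_2 = \bc{\bmu,\bbu} + \mcI_1\bbu_x$. Here I would compute $\delta_\bmu\mcH_2$ by varying both appearances of $\bbu$ (using $\delta\bbu = \mcI_0^{-1}\delta\bmu$) as well as the nonlocal factor $\partial_x^{-1}\bbu$, integrate by parts to move all derivatives onto $\bmu$-independent pieces, and use the Novikov identities (right-commutativity and left-symmetry), the symmetry and self-adjointness of $\mcI_0,\mcI_1$, and the cyclic/quasi-Frobenius properties of $\kappa$ to reorganize the resulting trilinear terms. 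Each of the three $\bbu$'s in the cubic term contributes symmetrically, which fixes the coefficient $\tfrac13$; combined with the variation of $\tfrac12\mcK(\mcI_1\bbu,\bbu)$, applying $\mcP_0 = \mcI_0\partial_x$ then collapses (after another integration by parts) to the desired right-hand side $\bc{\bmu,\bbu} + \mcI_1\bbu_x$. The bookkeeping of the cubic variation is the only genuinely delicate step; everything else is a direct consequence of \eqref{pt} and the algebraic lemmas already established.
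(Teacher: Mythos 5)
Your treatment of the flows and of the first two Hamiltonians is correct and in fact more self-contained than the paper at that point: you compute $\delta_\bmu\mcH_0=\bme$ directly, establish the formal self-adjointness of $\mcI_0$, $\mcI_1$ with respect to $\mcK$ (symmetry of $g,h,\eta$, skew-symmetry of $f$ plus integration by parts), deduce $\delta_\bmu\mcH_1=\mcI_0^{-1}\bmu=\bbu$, and read off \eqref{inthier} from \eqref{pt}. Note that the paper argues in the opposite direction: it solves the recursion $\mcP_1\delta_\bmu\mcH_i=\mcP_0\delta_\bmu\mcH_{i+1}$ to obtain the cosymmetries $\delta_\bmu\mcH_1=\bbu$ and $\delta_\bmu\mcH_2=\partial_x^{-1}\mcI_0^{-1}\bc{\bmu,\bbu}+\mcI_0^{-1}\mcI_1\bbu$, and then reconstructs the Hamiltonians \eqref{hhier} via the homotopy formula (with a pointer to the analogous $(1+1)$-dimensional computation in \cite{SS}); the direct verification you choose is exactly the alternative the paper flags as ``less straightforward'' because it needs the particular properties of the $2$-cocycles.

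The genuine gap is in the $\mcH_2$ step, which is the only nontrivial part of the statement and which you leave as a plan. Writing $\delta\bbu=\mcI_0^{-1}\bmv$, the variation of the cubic term $\frac13\mcK\bra{\bc{\bbu,\bmu},\partial_x^{-1}\bbu}$ splits into three pieces, $T_1=\mcK\bra{\bc{\delta\bbu,\bmu},\partial_x^{-1}\bbu}$, $T_2=\mcK\bra{\bc{\bbu,\bmv},\partial_x^{-1}\bbu}$ and $T_3=\mcK\bra{\bc{\bbu,\bmu},\partial_x^{-1}\delta\bbu}$. Only $T_3$ converts directly (skew-adjointness of $\partial_x^{-1}$, self-adjointness of $\mcI_0^{-1}$, antisymmetry of the bracket) into $\mcK\bra{\partial_x^{-1}\mcI_0^{-1}\bc{\bmu,\bbu},\bmv}$, i.e.\ into the claimed cosymmetry paired with $\bmv$; so what must actually be proved is $T_1=T_2=T_3$ (or at least $T_1+T_2=2T_3$). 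Your phrase ``each of the three $\bbu$'s contributes symmetrically'' is precisely this claim, and it does not follow from the self-adjointness of $\mcI_0,\mcI_1$ you established: $T_1$ and $T_2$ involve $\mcI_0^{-1}$ and $\mcI_0$ entangled with the Lie bracket, so one needs the ad-invariance of $\mcK$ together with the way the forms $\hat{g}_0,\hat{f}_0,\hat{h}_0,\hat{\eta}_{0,k}$ intertwine with the Novikov multiplication (Lemma~\ref{linc}, equivalently the $2$-cocycle identity for $\omega_0$) — exactly the content that the paper's homotopy-formula argument packages away. As written, the identification of $\mcH_2$ in \eqref{hhier} is asserted rather than proved; to close the argument you would have to carry out that trilinear computation, or reduce it explicitly to the cocycle identities.
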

\begin{proof}
The kernel of $\mcP_0$ is spanned by $x$-independent elements of ${\mathscr{L}_\mathbb{A}}$. For simplicity,
we choose that $\delta_\bmu\mcH_0 = \bm{e}$. Thus,
\begin{gather*}
	\bmu_{t_1} = \mcP_1\bm{e} \equiv \bmu_x = \mcP_0\delta_\bmu\mcH_1 \equiv \mcI_0 (\delta_\bmu\mcH_1)_x,
\end{gather*}
and solving for $\delta_\bmu\mcH_1$ we find that $\delta_\bmu\mcH_1 = \bbu$. Hence,
\begin{gather*}
\bmu_{t_2} = \mcP_1\bbu \equiv \bc{\bmu,\bbu} + \mathcal{I}_1 \bbu_x = \mcP_0\delta_\bmu\mcH_2 \equiv \mcI_0 (\delta_\bmu\mcH_2)_x
\end{gather*}
and we have\footnote{Applying \eqref{pt1} to $\delta_\bmu\mcH_2$ one can obtain the third flow, $\bmu_{t_3} = \mcP_1\delta_\bmu\mcH_2$, from the hierarchy \eqref{hier}, but it does not have such a `neat' form as previous flows.}
\begin{gather*}
		\delta_\bmu\mcH_2 = \partial_x^{-1}\mcI_0^{-1}\bc{\bmu,\bbu} + \mcI_0^{-1}\mcI_1\bbu.
\end{gather*}
The Hamiltonians $\mcH_i$ related to the cosymmetries $\delta_\bmu\mcH_i$ can be constructed using the homotopy formula, see~\cite{Blaszak, Olver} and proof of Theorem~1 in~\cite{SS}. Alternatively, one can verify the Hamiltonians~\eqref{hhier} using the formula~\eqref{diff}, but this way is less straightforward as one needs to use the particular properties of the respective $2$-cocycles.
\end{proof}

The evolution equations \eqref{inthier} and the Hamiltonians \eqref{hhier} can be written, in a more explicit form, using the auxiliary variable~$\bbu$,
see Appendix~\ref{app}.

\begin{Remark}
Observe that $\mcH_1$ from \eqref{hhier} is a quadratic Hamiltonian functional:
\begin{gather*}
		\mcH_1 = \frac{1}{2}\mcK\bra{\mcI_0 \bbu, \bbu} \equiv \frac{1}{2}\mcK\big(\bmu, \mcI_0^{-1} \bmu\big).
\end{gather*}
The operator $\mcI_0$ \eqref{inop0}, when it is invertible, can be interpreted as the inertia operator. The variables $\bmu$ and $\bbu$ play the role of the momentum and the velocity. As result, the second non-trivial flow in the hierarchy~\eqref{hhier} can be interpreted as the integrable Euler equation on the centrally extended Lie algebra~${\mathscr{L}_\mathbb{A}}$ corresponding to the geodesic flow on the space with metric defined by means of the inertia operator $\mcI_0$. Compare this with Remark~1 and Appendix~A.3 in the article~\cite{SS}. For more information on the subject we refer the reader to the books \cite{AK,KW}.
\end{Remark}

\section{Examples}

\subsection[Algebra $(N6)$ for $\kappa=-2$]{Algebra $\boldsymbol{(N6)}$ for $\boldsymbol{\kappa=-2}$}

This Novikov algebra is $2$-dimensional, non-abelian and non-associative, it is also non-transitive. The associated Lie algebra ${\mathscr{L}_\mathbb{A}}$ with the Lie bracket~\eqref{lb} is isomorphic to the Lie algebra $\Vect\big(\mathbb{S}^1\big)\ltimes\Vect^*\big(\mathbb{S}^1\big)$, that is the semidirect sum of the algebra of smooth vector fields on the circle $\mathbb{S}^1$, $\Vect\big(\mathbb{S}^1\big)$, with its dual $\Vect^* \big(\mathbb{S}^1\big)$, see~\cite{OR} and~\cite{SeSz}.

The structure matrix and the multiplication of the algebra $\mathbb{A}$ are
\begin{gather*}
		\mathcal{B}=\pmatrx{
 {\bm e}_1 & -2 {\bm e}_2 \\
 {\bm e}_2 & 0 },\qquad \pmatrx{a_1\\ a_2}\cdot\pmatrx{b_1\\ b_2} = \pmatrx{a_1b_1\\ a_2b_1 - 2a_1b_2}.
\end{gather*}
The right unity is $\bme = (1, 0)^{\rm T}$. Taking the advantage of the classification of the form $\eta$ in Table~\ref{tab1}, we can define the Killing form \eqref{killing} by
\begin{gather*}
		\hat{\kappa} = \pmatrx{0 & 1\\ 1 & 0}.
\end{gather*}
For $\bmu\in{\mathscr{L}_\mathbb{A}}$ and the related differential of a functional $\mcH$ we choose
\begin{gather*}
\bmu = \pmatrx{u(x,y)\\ v(x,y)},\qquad \delta_{\bmu}\mcH = \pmatrx{\dfrac{\delta H}{\delta v}\vspace{1mm}\\ \dfrac{\delta H}{\delta u}},
\end{gather*}
as then we get the usual Euclidean duality:
\begin{gather*}
		\mcK\bra{\delta_{\bmu}{\mathcal H},\bmu_t} = \int_{\Omega} \left(\frac{\delta H}{\delta u}u_t + \frac{\delta H}{\delta v}v_t\right)\dx,
\end{gather*}
where $\Omega = \mathbb{S}^1\times\mathbb{S}^1$ and $\dx\equiv {\rm d}x{\rm d}y$.

Using the classification of central extensions in Table~\ref{tab1} the operators~\eqref{inop} are
\begin{gather*}
	\mcI_0 = \pmatrx{g_1 & 0\\ g_2 & g_1} + \pmatrx{0 & 0\\ h & 0}\partial_x^2 + \pmatrx{\eta & 0\\ 0 & \eta}\partial_x^{-1}\partial_y
\end{gather*}
and
\begin{gather*}
	\mcI_1 =
\pmatrx{0 & 0\\ \beta & 0}\partial_x^2 + \pmatrx{\gamma & 0\\ 0 & \gamma}\partial_x^{-1}\partial_y.
\end{gather*}
In this case the first-order cocycle is trivial, this means that in the Poisson tensor~\eqref{pt1} $\hat{g}_1$ can be obtained/removed through a shift of the dependent variables. So, we do not consider~$\hat{g}_1$ in~$\mcI_1$. The relation~\eqref{inerrel} between the momentum variable $\bmu$ and the velocity variable $\bbu= (\bar{u}(x,y),\bar{v}(x,y))^{\rm T}$ is
\begin{gather*}
u = g_1 \bar{u} + \eta \partial_x^{-1}\bar{u}_y,\\
v = g_1 \bar{v} + \eta \partial_x^{-1}\bar{v}_y + g_2\bar{u} + h\bar{u}_{xx}.
\end{gather*}
The second flow from the hierarchy \eqref{hier} takes the form
\begin{gather*}
u_{t_2} = u_x\bar{u}-u\bar{u}_x + \gamma\bar{u}_y,\\
v_{t_2} = \bar{u}v_x-u\bar{v}_x -2u_x\bar{v} + 2\bar{u}_xv + \gamma\bar{v}_y + \beta\bar{u}_{xxx}.
\end{gather*}
The related Poisson tensors \eqref{pt} are
\begin{gather*}
	\mcP_0 = \pmatrx{0 & g_1\partial_x + \eta\partial_y\\ g_1\partial_x + \eta\partial_y & g_2\partial_x + h\partial_x^3}
\end{gather*}
and
\begin{gather*}
	\mcP_1 = \pmatrx{0 & u_x-u\partial_x + \alpha_1\partial_x + \gamma\partial_y\\
	-2u_x-u\partial_x + \alpha_1\partial_x + \gamma\partial_y & v_x+2v\partial_x + \alpha_2\partial_x + \beta\partial_x^3}.
\end{gather*}
Notice that the above Poisson tensors are defined with respect to the standard duality pairing. The Hamiltonian functionals \eqref{hhier} are
\begin{gather*}
 \mathcal{H}_0 = \int_{\Omega}v \dx,\\
 \mathcal{H}_1 = \frac{1}{2}\int_{\Omega}\bra{\bar{u}v + u\bar{v}} \dx,\\
\mathcal{H}_2 = \int_{\Omega}\left[\frac{1}{3}\bra{2 u_x\bar{v} - \bar{u} v_x - 2 \bar{u}_x v + u \bar{v}_x}\partial_x^{-1}\bar{u} + \frac{1}{3}\bra{u \bar{u}_x - u_x \bar{u}}\partial_x^{-1}\bar{v} \right.\\
\left.\hphantom{\mathcal{H}_2 =}{} + \frac{1}{2}\big(\gamma \bar{u}\partial_x^{-1}\bar{v}_y + \beta \bar{u}\bar{u}_{xx} + \gamma \bar{v} \partial_x^{-1}\bar{u}_y \big)\right]\dx.
\end{gather*}
The particular cases $\eta_1 = \eta_2=h=0$ and $g_1 = g_2=\beta=0$ of the above bi-Hamiltonian structure were considered in the work \cite{SeSz},
see the two examples therein.

\subsection[Algebra $(D6)$ for $\kappa=-2$]{Algebra $\boldsymbol{(D6)}$ for $\boldsymbol{\kappa=-2}$}

This Novikov algebra can be considered as the $3$-dimensional extension of the algebra from the previous section as the subalgebra
spanned by $\{\bme_1,\bme_2\}$ is identical to the algebra $(N6)$ ($\kappa=-2$).

We proceed as before. The structure matrix and multiplication are
\begin{gather*}
		\mathcal{B} = \pmatrx{
 \bme_1 & -2 \bme_2 & -\frac{1}{2}\bme_3 \\
 \bme_2 & 0 & 0 \\
 \bme_3 & 0 & -\frac{1}{2}\bme_2 },\qquad \pmatrx{a_1\\ a_2\\ a_3}\cdot\pmatrx{b_1\\ b_2\\ b_3} =
\pmatrx{a_1b_1\\ a_2b_1 - 2a_1b_2-\frac{1}{2}a_3b_3\\ a_3b_1-\frac{1}{2}a_1b_3}.
\end{gather*}
The right unity is $\bme = (1, 0,0)^{\rm T}$. The Killing form \eqref{killing} is defined by
\begin{gather*}
		\hat{\kappa} := \pmatrx{0 & 1 & 0\\ 1 & 0 & 0\\ 0 & 0 & 1}.
\end{gather*}
For $\bmu\in{\mathscr{L}_\mathbb{A}}$ and related differentials we choose
\begin{gather*}
\bmu = \pmatrx{u(x,y)\\ v(x,y)\\ w(x,y)},\qquad \delta_{\bmu}{\mathcal H} = \pmatrx{\dfrac{\delta H}{\delta v}\vspace{2mm}\\ \dfrac{\delta H}{\delta u}\vspace{2mm}\\ \dfrac{\delta H}{\delta w}},
\end{gather*}
as then we have
\begin{gather*}
		\mcK\bra{\delta_{\bmu}{\mathcal H},\bmu_t} = \int_{\Omega} \left(\frac{\delta H}{\delta u}u_t + \frac{\delta H}{\delta v}v_t+ \frac{\delta H}{\delta w}w_t\right)\dx,
\end{gather*}
where $\Omega = \mathbb{S}^1\times\mathbb{S}^1$ and $\dx\equiv {\rm d}x{\rm d}y$. The operators \eqref{inop} are
\begin{gather*}
	\mcI_0 = \pmatrx{ g_1 & 0 & 0 \\ g_2 & g_1 & g_3 \\ g_3 & 0 & g_1}
+ \pmatrx{ 0 & 0 & 0 \\ h & 0 & 0 \\ 0 & 0 & 0}\partial_x^2 + \pmatrx{\eta & 0 & 0\\ 0 & \eta & 0\\ 0 & 0 & \eta}\partial_x^{-1}\partial_y
\end{gather*}
and
\begin{gather*}
	\mcI_1 =
\pmatrx{ 0 & 0 & 0 \\ \beta & 0 & 0 \\ 0 & 0 & 0}\partial_x^2 + \pmatrx{\gamma & 0 & 0\\ 0 & \gamma & 0\\ 0 & 0 & \gamma}\partial_x^{-1}\partial_y.
\end{gather*}
The relation \eqref{inerrel} for the auxiliary variables $\bbu = (\bar{u}(x,y), \bar{v}(x,y), \bar{w}(x,y))^{\rm T}$ is
\begin{gather*}
u = g_1 \bar{u} + \eta \partial_x^{-1}\bar{u}_y,\\
v = g_2\bar{u} + g_1 \bar{v}+ g_3\bar{w} + \eta \partial_x^{-1}\bar{v}_y + h\bar{u}_{xx},\\
w = g_3\bar{u} + g_1\bar{w} + \eta \partial_x^{-1}\bar{w}_y .
\end{gather*}
Then, we find the Poisson tensors~\eqref{pt}
\begin{gather*}
	\mcP_0 = \pmatrx{0 & g_1\partial_x + \eta\partial_y & 0\\ g_1\partial_x + \eta\partial_y & g_2\partial_x + h\partial_x^3 & g_3\partial_x\\ 0 & g_3\partial_x & g_1\partial_x + \eta\partial_y}
\end{gather*}
and
\begin{gather*}
	\mcP_1 = \pmatrx{0 & u_x-u\partial_x + \gamma\partial_y & 0\\
	-2u_x-u\partial_x + \gamma\partial_y & v_x+2v\partial_x + \beta\partial_x^3 & -\frac{1}{2}w_x +\frac{1}{2}w\partial_x \\
 0 & w_x + \frac{1}{2}w\partial_x & -\frac{1}{2}u_x-u\partial_x + \gamma\partial_y },
\end{gather*}
the second system from the hierarchy \eqref{inthier}
\begin{gather*}
u_{t_2} = u_x\bar{u} - u\bar{u}_x + \gamma \bar{u}_y,\\
v_{t_2} = -2 u_x\bar{v} + \bar{u}v_x + 2 u_x v - u \bar{v}_x - \frac{1}{2} w_x\bar{w} + \frac{1}{2} w \bar{w}_x+ \gamma \bar{v}_y + \beta \bar{u}_{xxx},\\
w_{t_2} = -\frac{1}{2} u_x \bar{w} + \bar{u} w_x + \frac{1}{2} \bar{u}_x w - u \bar{w}_x+ \gamma \bar{w}_y,
\end{gather*}
and the related Hamiltonians \eqref{hhier}
\begin{gather*}
 \mathcal{H}_0 = \int_{\Omega}v \dx,\\
 \mathcal{H}_1 = \frac{1}{2}\int_{\Omega}\bra{\bar{u}v + u \bar{v} + w \bar{w}} \dx,\\
\mathcal{H}_2 = \int_{\Omega}\left[\frac{1}{3}\left(2 u_x\bar{v} - \bar{u} v_x - 2 \bar{u}_x v + u \bar{v}_x + \frac{1}{2} w_x\bar{w} - \frac{1}{2} w \bar{w}_x\right)\partial_x^{-1}\bar{u}\right.\\
\hphantom{\mathcal{H}_2 =}{} + \frac{1}{3}\bra{u \bar{u}_x - u_x \bar{u}}\partial_x^{-1}\bar{v} +\frac{1}{3}\left(\frac{1}{2} u_x\bar{w} - \bar{u} w_x - \frac{1}{2} \bar{u}_x w + u \bar{w}_x\right)\partial_x^{-1}\bar{w}\\
\left.\hphantom{\mathcal{H}_2 =}{} + \frac{1}{2}\big(\gamma \bar{u}\partial_x^{-1}\bar{v}_y + \beta \bar{u}\bar{u}_{xx} + \gamma \bar{v} \partial_x^{-1}\bar{u}_y + \gamma \bar{w} \partial_x^{-1}\bar{w}_y\big)\right]\dx.
\end{gather*}
Taking $w=\bar{w}=0$ and $g_3=0$ the above evolution equations and the related bi-Hamiltonian structure reduce to the case from the previous section.

\subsection[Algebra $(N3)\otimes(N6)$ for $\kappa=-2$]{Algebra $\boldsymbol{(N3)\otimes(N6)}$ for $\boldsymbol{\kappa=-2}$}\label{dim4}

This Novikov algebra is the tensor product of the associative and abelian algebra $(N3)$ with the Novikov algebra $(N6)$ for $\kappa=-2$. It is non-abelian, non-associative, and also non-transitive. Up to dimension $4$ this is the only non-transitive Novikov algebra that allows for the construction of the $(3+1)$-dimensional bi-Hamiltonian hierarchy~\eqref{hier}.

The structure matrix and algebra multiplication are:\
\begin{gather*}
		\mathcal{B} = \pmatrx{
 \bme_1 & -2 \bme_2 & \bme_3 & -2\bme_4\\
 \bme_2 & 0 & \bme_4 & 0 \\
 \bme_3 & -2\bme_4 & 0 & 0\\
\bme_4 & 0 & 0 & 0},\qquad \pmatrx{a_1\\ a_2\\ a_3\\ a_4}\cdot\pmatrx{b_1\\ b_2\\ b_3\\ b_4} =
\pmatrx{a_1b_1\\ a_2b_1 - 2a_1b_2\\ a_3b_1 + a_1b_3\\ a_4b_1 - 2a_3b_2 + a_2b_3 - 2a_1 b_4}.
\end{gather*}
The right unity is $\bme = (1, 0, 0, 0)^{\rm T}$.

As the Killing form \eqref{killing} we can take
\begin{gather*}
		\hat{\kappa} := \pmatrx{0 & 0 & 0 & 1\\ 0 & 0 & 1 & 0\\ 0 & 1 & 0 & 0\\ 1 & 0 & 0 & 0},
\end{gather*}
thus $\bmu = \bra{u(x,y,z), v(x,y,z), w(x,y,z), s(x,y,z)}^{\rm T}$ and $\delta_{\bmu}{\mathcal H} = \big(\frac{\delta H}{\delta s}, \frac{\delta H}{\delta w},
\frac{\delta H}{\delta v}, \frac{\delta H}{\delta u}\big)^{\rm T}$ so that
\begin{gather*}
		\mcK\bra{\delta_{\bmu}{\mathcal H},\bmu_t} = \int_{\Omega} \left(\frac{\delta H}{\delta u}u_t + \frac{\delta H}{\delta v}v_t+ \frac{\delta H}{\delta w}w_t
+ \frac{\delta H}{\delta s}s_t\right)\dx,
\end{gather*}
where $\Omega = \mathbb{S}^1\times\mathbb{S}^1\times\mathbb{S}^1$ and $\dx\equiv {\rm d}x{\rm d}y{\rm d}z$.

In this case there are two linearly independent central extensions of the form~\eqref{cocy}, see Table~\ref{tab2}. Hence, we introduce two additional
independent variables $y$ and $z$. To simplify this example, we further consider only the central extension of the type \eqref{coc3x} and \eqref{cocy}.

As result, the operators \eqref{inop} are
\begin{gather*}
\mcI_0 = \pmatrx{0 & 0 & 0 & 0 \\ h_1 & 0 & 0 & 0 \\ 0 & 0 & 0 & 0 \\ h_2 & 0 & h_1 & 0}\partial_x^{2}
+ \pmatrx{\eta _1 & 0 & 0 & 0 \\ 0 & \eta _1 & 0 & 0 \\ 0 & 0 & \eta _1 & 0 \\ 0 & 0 & 0 & \eta _1}\partial_x^{-1}\partial_y
+ \pmatrx{0 & 0 & 0 & 0 \\ 0 & 0 & 0 & 0 \\ \eta _2 & 0 & 0 & 0 \\ 0 & \eta _2 & 0 & 0}\partial_x^{-1}\partial_z
\end{gather*}
and
\begin{gather*}
\mcI_1 = \pmatrx{0 & 0 & 0 & 0 \\ \beta_1 & 0 & 0 & 0 \\ 0 & 0 & 0 & 0 \\ \beta_2 & 0 & \beta_1 & 0}\partial_x^{2}
+ \pmatrx{\gamma _1 & 0 & 0 & 0 \\ 0 & \gamma _1 & 0 & 0 \\ 0 & 0 & \gamma _1 & 0 \\ 0 & 0 & 0 & \gamma _1}\partial_x^{-1}\partial_y
+ \pmatrx{0 & 0 & 0 & 0 \\ 0 & 0 & 0 & 0 \\ \gamma _2 & 0 & 0 & 0 \\ 0 & \gamma _2 & 0 & 0}\partial_x^{-1}\partial_z.
\end{gather*}
Next, for $\bbu = (\bar{u}(x,y), \bar{v}(x,y), \bar{w}(x,y), \bar{s}(x,y,z))^{\rm T}$ the relation \eqref{inerrel} is
\begin{gather*}
u = \eta_1\partial_x^{-1}\bar{u}_y,\\
v = \eta_1\partial_x^{-1}\bar{v}_y + h_1\bar{u}_{xx},\\
w = \eta_1\partial_x^{-1}\bar{w}_y + \eta_2\partial_x^{-1}\bar{u}_z,\\
s = \eta_1\partial_x^{-1}\bar{s}_y + \eta_2\partial_x^{-1}\bar{v}_z+ h_1\bar{w}_{xx}+ h_2\bar{u}_{xx}.
\end{gather*}
In this case, the three-dimensional compatible Poisson tensors \eqref{pt} are
\begin{gather*}
		\mcP_0 = \pmatrx{0 & 0 & 0 & \eta_1\partial_y \\ 0 & 0 & \eta_1\partial_y & h_1\partial_x^3 \\ 0 & \eta_1\partial_y & 0 & \eta_2\partial_z \\ \eta_1\partial_y & h_1\partial_x^3 & \eta_2\partial_z & h_2\partial_x^3}
\end{gather*}
and
\begin{gather*}
	\mcP_1 = \pmatrx{0 & 0 & 0 & u_x-u\partial_x + \gamma_1\partial_y\\ 0 & 0 & -2u_x-u\partial_x + \gamma_1\partial_y & v_x+2v\partial_x + \beta_1\partial_x^3\\
0 & u_x-u\partial_x + \gamma_1\partial_y& 0 & w_x-w\partial_x + \gamma_1\partial_z\\ -2u_x-u\partial_x + \gamma_2\partial_y\!& v_x+2v\partial_x + \beta_1\partial_x^3 & -2w_x-w\partial_x + \gamma_2\partial_z & s_x+2s\partial_x + \beta_2\partial_x^3}.	
\end{gather*}
The second nontrivial system from the hierarchy \eqref{inthier} is
\begin{gather*}
u_{t_2} = u_x\bar{u} - u \bar{u}_x + \gamma_1\bar{u}_y,\\
v_{t_2} = -2 u_x\bar{v} + \bar{u}v_x + 2 \bar{u}_xv - u\bar{v}_x + \gamma_1\bar{v}_y + \beta_1 \bar{u}_{xxx},\\
w_{t_2} = u_x\bar{w} + \bar{u}w_x - \bar{u}_x w - u\bar{w}_x + \gamma_1\bar{w}_y + \gamma_2\bar{u}_z,\\
s_{t_2} = \bar{u}s_x - 2 u_x\bar{s} + v_x\bar{w} - 2 \bar{v}w_x - u\bar{s}_x + 2 \bar{u}_x s - \bar{v}_x w + 2 v\bar{w}_x\\
\hphantom{s_{t_2} =}{} + \gamma_1\bar{s}_y + \gamma_2\bar{v}_z + \beta_1 \bar{w}_{xxx} + \beta_2 \bar{u}_{xxx}.
\end{gather*}
Notice that this is a $(3+1)$-dimensional (dispersive) integrable bi-Hamiltonian evolution system and it can be explicitly written either by means of the components of the momentum variable~$\bmu$ or the velocity variable~$\bbu$. The respective Hamiltonians \eqref{hhier} are
\begin{gather*}
 \mathcal{H}_0 = \int_{\Omega}s\dx,\\
 \mathcal{H}_1 = \frac{1}{2}\int_{\Omega}\bra{u \bar{s} + \bar{u} s + \bar{v} w + v \bar{w}} \dx,
\end{gather*}
and
\begin{gather*}
 \mathcal{H}_2 = \int_{\Omega}\left[\frac{1}{3}\bra{-\bar{u}s_x + 2 u_x\bar{s} + u \bar{s}_x - 2 \bar{u}_x s - v_x\bar{w} + 2 \bar{v} w_x + \bar{v}_x w - 2 v \bar{w}_x}\partial_x^{-1}\bar{u}\right.\\
\hphantom{\mathcal{H}_2 =}{} - \frac{1}{3}\bra{u_x\bar{w} + \bar{u} w_x - \bar{u}_x w - u \bar{w}_x}\partial_x^{-1}\bar{v}
+ \frac{1}{3}\bra{2 u_x\bar{v} - \bar{u} v_x - 2 \bar{u}_x v + u \bar{v}_x}\partial_x^{-1}\bar{w}\\
\hphantom{\mathcal{H}_2 =}{} + \frac{1}{3}\bra{u \bar{u}_x - u_x\bar{u}}\partial_x^{-1}\bar{s}
+\frac{1}{2}\big(\gamma_1\bar{s}\partial_x^{-1}\bar{u}_y + \gamma_1\bar{w}\partial_x^{-1}\bar{v}_y
 + \gamma_1\bar{v}\partial_x^{-1}\bar{w}_y + \gamma_2\bar{v}\partial_x^{-1}\bar{u}_z\big)\\
\left.\hphantom{\mathcal{H}_2 =}{} +\frac{1}{2}\big(\gamma_1\bar{u}\partial_x^{-1}\bar{s}_y + \gamma_2\bar{u}\partial_x^{-1}\bar{v}_z
+\beta_1\bar{u}_{xx}\bar{w} + \beta_1\bar{u}\bar{w}_{xx} + \beta_2\bar{u}\bar{u}_{xx}\big)\right]\dx.
\end{gather*}
Notice that in the example from this section one can reduce dimension taking $y=z$ or $y=x$ or $z=x$.

\appendix

\section{Explicit form of the Bi-Hamiltonian hierarchy}\label{app}

In the variable $\bbu$ the bi-Hamiltonian chain \eqref{hier} takes the form
\begin{gather*}
\bbu_{t_0} = \bar{\mcP}_0\delta_\bbu\mcH_0\equiv 0,\\
\bbu_{t_1} = \bar{\mcP}_1\delta_\bbu\mcH_0=\bar{\mcP}_0\delta_\bbu\mcH_1,\\
\bbu_{t_2} = \bar{\mcP}_1\delta_\bbu\mcH_1=\bar{\mcP}_0\delta_\bbu\mcH_2,\\
\cdots\cdots\cdots\cdots\cdots\cdots\cdots\cdots\cdots,
\end{gather*}
where $\bar{\mcP}_i = \mcI_0^{-1}\mcP_i\mcI_0^{-1}$ and $\delta_\bbu\mcH = \mcI_0\delta_\bmu\mcH$. Then,
\begin{gather}
		\bbu_{t_1} = \bbu_x,\qquad
		\mathcal{I}_0 \bbu_{t_2} = \bc{\mathcal{I}_0 \bbu,\bbu} + \mathcal{I}_1 \bbu_x\label{uhier}
\end{gather}
and
\begin{gather*}
		\mcH_0 = \mcK\bra{\bm{c},\mcI_0\bbu},\qquad
		 \mcH_1 = \frac{1}{2}\mcK\bra{\mcI_0\bbu,\bbu},\qquad
		 \mcH_2 = \frac{1}{3}\mcK\big(\bc{\bbu,\mcI_0\bbu},\partial_x^{-1}\bbu\big)+ \frac{1}{2}\mcK\bra{\mcI_1\bbu,\bbu}.
\end{gather*}

\begin{Proposition} The explicit form, in the variable $\bbu$, of the second flow from \eqref{uhier} is
\begin{gather*}
	\hat{g}_0 \bbu_{t_2} + \hat{f}_0 \bbu_{xt_2} + \hat{h}_0 \bbu_{xxt_2} + \sum_{k=1}^N\hat{\eta}_{0,k} \partial_x^{-1}\bbu_{y_k t_2} \\
	=\left(\frac{1}{2}\hat{g}_0\bbu\cdot\bbu - \frac{1}{2}\bbu\cdot\hat{g}_0\bbu + \hat{f}_0\bbu_x\cdot\bbu +
	\hat{h}_0\bbu_{xx}\cdot\bbu + \frac{1}{2}\hat{h}_0\bbu_x\cdot\bbu_x
	 -\sum_{k=1}^N\hat{\eta}_{0,k}\bbu\cdot\partial_x^{-1}\bbu_{y_k} \right)_x \\
\quad{} + \sum_{k=1}^N\hat{\eta}_{0,k}(\bbu\cdot\bbu)_{y_k}
+ \hat{g}_1\bbu_x + \hat{f}_1\bbu_{xx} + \hat{h}_1\bbu_{xxx} + \sum_{k=1}^N\hat{\eta}_{1,k}\bbu_{y_k},
\end{gather*}
and the explicit form of Hamiltonians is
\begin{gather*}
	\mcH_0 = \mcK\bra{\hat{g}_0\bmc,\bbu},\\
	\mcH_1 = \frac{1}{2}\mcK\bra{\hat{g}_0\bbu,\bbu} + \frac{1}{2}\mcK\bra{\hat{f}_0\bbu_x,\bbu} + \frac{1}{2}\mcK\bra{\hat{h}_0\bbu_{xx},\bbu}
	+\frac{1}{2}\sum_{k=1}^N\mcK\bra{\hat{\eta}_{0,k}\partial_x^{-1}\bbu_{y_k},\bbu},\\
	\mcH_2 = \frac{1}{2}\mcK\bra{\hat{g}_0\bbu,\bbu\cdot \bbu} + \frac{1}{3}\mcK\bra{\hat{f}_0\bbu_x,\bbu\cdot \bbu}
	+ \frac{1}{3}\mcK\bra{\hat{h}_0\bbu_{xx} ,\bbu\cdot \bbu} + \frac{1}{6}\mcK\bra{\hat{h}_0\bbu_x,\bbu\cdot \bbu_x}\\
\hphantom{\mcH_2 =}{}+ \frac{1}{3}\sum_{k=1}^N\mcK\bra{\hat{\eta}_{0,k}\bbu,\partial_x^{-1}\bbu_{y_k}\cdot \bbu} - \frac{1}{3}\sum_{k=1}^N \mcK\bra{\hat{\eta}_{0,k}\bbu,\bbu\cdot\partial_x^{-1}\bbu_{y_k}}\\
\hphantom{\mcH_2 =}{} + \frac{1}{2}\mcK\bra{\hat{g}_1\bbu,\bbu} + \frac{1}{2}\mcK\bra{\hat{f}_1\bbu_x,\bbu} + \frac{1}{2}\mcK\bra{\hat{h}_1\bbu_{xx},\bbu} + \frac{1}{2}\sum_{k=1}^N\mcK\bra{\hat{\eta}_{1,k}\partial_x^{-1}\bbu_{y_k},\bbu}.
\end{gather*}
\end{Proposition}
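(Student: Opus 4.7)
The plan is to start from the compact bi-Hamiltonian form \eqref{uhier}, namely $\mcI_0\bbu_{t_2}=\bc{\mcI_0\bbu,\bbu}+\mcI_1\bbu_x$, and then substitute the explicit pseudo-differential expressions \eqref{inop0}, \eqref{inop1} for $\mcI_0$ and $\mcI_1$. The right-hand side produces a sum over the four types of generators $\hat{g}_0,\hat{f}_0,\hat{h}_0,\hat{\eta}_{0,k}$ (resp.\ $\hat{g}_1,\hat{f}_1,\hat{h}_1,\hat{\eta}_{1,k}$), and the main task is to rewrite each contribution to $\bc{\mcI_0\bbu,\bbu}=(\mcI_0\bbu)_x\cdot\bbu-\bbu_x\cdot(\mcI_0\bbu)$ as a total $x$-derivative of a product involving $\bbu$, plus the irreducible nonlocal $y_k$-derivative terms, so that the action of $\partial_x^{-1}$ from $\hat{\eta}_{0,k}\partial_x^{-1}\partial_{y_k}$ cancels cleanly on the left.

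The key computational step is the use of Lemma~\ref{linc}. For the $\hat{g}_0$-term one invokes $\hat{g}_0(\bma\cdot\bmb)=(\hat{g}_0\bma)\cdot\bmb$ together with the second relation $(\hat{g}_0\bma)\cdot\bmb-\bma\cdot(\hat{g}_0\bmb)=(\hat{g}_0\bmb)\cdot\bma-\bmb\cdot(\hat{g}_0\bma)$ in order to convert $(\hat{g}_0\bbu)_x\cdot\bbu-\bbu_x\cdot\hat{g}_0\bbu$ into $\tfrac12(\hat{g}_0\bbu\cdot\bbu-\bbu\cdot\hat{g}_0\bbu)_x$. For the $\hat{f}_0$- and $\hat{h}_0$-terms one proceeds similarly, using $\hat{f}_0(\bma\cdot\bmb)=(\hat{f}_0\bma)\cdot\bmb$ and the analogous identity for $\hat{h}_0$, combined with Leibniz rule in $x$, to produce the total $x$-derivatives $(\hat{f}_0\bbu_x\cdot\bbu)_x$ and $(\hat{h}_0\bbu_{xx}\cdot\bbu+\tfrac12\hat{h}_0\bbu_x\cdot\bbu_x)_x$. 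Finally, for the nonlocal $\hat{\eta}_{0,k}$-piece the first Lemma~\ref{linc} relation $\hat{\eta}(\bma\cdot\bmb)=(\hat{\eta}\bma)\cdot\bmb=\bma\cdot(\hat{\eta}\bmb)$ allows one to rewrite $(\hat{\eta}_{0,k}\partial_x^{-1}\bbu_{y_k})_x\cdot\bbu-\bbu_x\cdot\hat{\eta}_{0,k}\partial_x^{-1}\bbu_{y_k}$ as $\hat{\eta}_{0,k}(\bbu\cdot\bbu)_{y_k}-(\hat{\eta}_{0,k}\bbu\cdot\partial_x^{-1}\bbu_{y_k})_x$, which is exactly the form appearing in the claim; the $\mcI_1\bbu_x$ term is substituted verbatim.

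For the Hamiltonians, the plan is to substitute $\mcI_0\bbu=\hat{g}_0\bbu+\hat{f}_0\bbu_x+\hat{h}_0\bbu_{xx}+\sum_k\hat{\eta}_{0,k}\partial_x^{-1}\bbu_{y_k}$ into the formulas $\mcH_0=\mcK(\bmc,\mcI_0\bbu)$, $\mcH_1=\tfrac12\mcK(\mcI_0\bbu,\bbu)$ and $\mcH_2=\tfrac13\mcK([\bbu,\mcI_0\bbu],\partial_x^{-1}\bbu)+\tfrac12\mcK(\mcI_1\bbu,\bbu)$. For $\mcH_0$ and $\mcH_1$ this is immediate. For $\mcH_2$ one expands $[\bbu,\mcI_0\bbu]=\bbu_x\cdot\mcI_0\bbu-(\mcI_0\bbu)_x\cdot\bbu$ and uses the $ad$-invariance of $\mcK$, the symmetry/cyclicity of the forms (Section~3), and integration by parts in $x$ inside $\int_\Omega$ to reorganize each of the four resulting pieces into the stated form; in particular, the differences of the $\hat{\eta}_{0,k}$-contributions give the antisymmetric combination $\mcK(\hat{\eta}_{0,k}\bbu,\partial_x^{-1}\bbu_{y_k}\cdot\bbu)-\mcK(\hat{\eta}_{0,k}\bbu,\bbu\cdot\partial_x^{-1}\bbu_{y_k})$, while integration by parts converts $\hat{h}_0\bbu_x$ coefficients into the $\tfrac16\mcK(\hat{h}_0\bbu_x,\bbu\cdot\bbu_x)$ factor through the symmetry of $h$.

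The main obstacle is bookkeeping rather than conceptual: one must keep careful track of which of the two equivalent Lemma~\ref{linc} identities is being used at each step, since one form is needed to pull the generator $\hat{g}_0,\hat{f}_0,\hat{h}_0,\hat{\eta}_{0,k}$ through the Novikov product (to produce total derivatives on the flow side), while the other is needed to symmetrize the Killing form pairing on the Hamiltonian side. A secondary subtlety is the handling of the nonlocal factor $\partial_x^{-1}$ in the $\hat{\eta}_{0,k}$-terms: in the flow computation the identity $\partial_x(\partial_x^{-1}\bbu_{y_k})=\bbu_{y_k}$ must be used before the Lemma~\ref{linc} identity, while in $\mcH_2$ the surviving $\partial_x^{-1}$ inside $\mcK$ must be left intact, and one only discards boundary terms because of the compactness of $\Omega=\mathbb{S}^1\times\cdots\times\mathbb{S}^1$. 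Once these conventions are fixed, the proposition follows by direct algebraic verification, which the author evidently regards as routine.
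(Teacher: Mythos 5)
Your proposal is correct and takes essentially the same route as the paper: the paper's (deliberately terse) proof consists precisely of the four identities rewriting the $\hat{g}_0$-, $\hat{f}_0$-, $\hat{h}_0$- and $\hat{\eta}_{0,k}$-contributions to $\bc{\mcI_0\bbu,\bbu}$ as total $x$-derivatives plus the $(\bbu\cdot\bbu)_{y_k}$ terms, justified by the relations of Lemma~\ref{linc}, which are exactly the identities you isolate. Your handling of the Hamiltonians by direct substitution of $\mcI_0\bbu$ and $\mcI_1\bbu$ into the formulas in the variable $\bbu$, using symmetry of the forms and integration by parts over $\Omega$, matches the verification the paper leaves to the reader.
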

\begin{proof}We will skip the details. The proposition is a consequence of the following relations
\begin{gather*}
\bc{\hat{g}_0\bbu,\bbu} = (\hat{g}_0\bbu_x)\cdot\bbu - \bbu_x\cdot(\hat{g}_0\bbu) =
\frac{1}{2}\bra{\hat{g}_0\bbu\cdot\bbu - \bbu\cdot\hat{g}_0\bbu}_x,\\
\bc{\hat{f}_0\bbu_x,\bbu} = (\hat{f}_0\bbu_{xx})\cdot\bbu - \bbu_x\cdot(\hat{f}_0\bbu_x) =
\hat{f}_0(\bbu_x\cdot\bbu)_x,\\
\bc{\hat{h}_0\bbu_{xx},\bbu} = (\hat{h}_0\bbu_{xxx})\cdot\bbu - \bbu_{x}\cdot(\hat{h}_0\bbu_{xx}) =
\hat{h}_0\left(\bbu_{xx}\cdot\bbu+\frac{1}{2}\bbu_x\cdot\bbu_x\right)_x,\\
\bc{\hat{\eta}_{0,k} \partial_x^{-1}\bbu_{y_k},\bbu} = (\hat{\eta}_{0,k}\bbu_{y_k})\cdot\bbu - \bbu_x\cdot\big(\hat{\eta}_{0,k}\partial_x^{-1}\bbu_{y_k}\big) =
\hat{\eta}_{0,k}(\bbu\cdot\bbu)_{y_k} - \hat{\eta}_{0,k}\big(\bbu\cdot\partial_x^{-1}\bbu_{y_k}\big)_x,
\end{gather*}
which can be proven using all the properties of the linear forms generating respective $2$-cocycles obtained in Lemma~\ref{linc}.
\end{proof}

The above proposition is the higher-dimensional extension of Theorem~1 from the article \cite{SS}, see also the equation~(26) therein.

{\arraycolsep=3pt
\setlength{\tabcolsep}{-2pt}
\begin{landscape}
\begin{table}[h!]
\centering
 \caption{The classification of the bilinear forms generating respective differential $2$-cocycles~\eqref{cocycles} and~\eqref{cocy} for all non-decomposable $1$ and $2$-dimensional Novikov algebras and the significant $3$-dimensional cases corresponding to Theorem~\ref{classif}. We use for the Novikov algebras the same names as in~\cite{BM1}.}\label{tab1}
 \vspace{1mm}

\begin{tabular}{ c c c c c c c }
\hline type & $\mathcal{B}$ & $g$ & $f$ & $h$ & $\eta$ & comments \\\hline
$\mathbb{C}$ & $e_1$& $g_{11}$ & $0$ & $h_{11}$ & 0 & $\begin{array}{@{}c@{}}\textf{abelian}\\[-1ex] \textf{associative}\end{array}$\tsep{2pt}\bsep{1pt}\\
$(T2)$ & $
\pmatrx{
 \bme_2 & 0 \\
 0 & 0}
$ & $
\pmatrx{
 g_{1 1} & g_{1 2} \\
 g_{1 2} & 0}
$ & $
\pmatrx{
 0 & 0 \\
 0 & 0}
$ & $
\pmatrx{
 h_{1 1} & h_{1 2} \\
 h_{1 2} & 0}
$ & $
\pmatrx{
 \eta _{11} & 0 \\
 0 & 0}
$ & $
\begin{array}{@{}c@{}}\textf{abelian}\\[-1ex] \textf{associative}\\[-1ex]\textf{transitive}\end{array}
$\bsep{10pt}\\
$(T3)$ & $
\pmatrx{
 0 & 0 \\
 -\bme_1 & 0 }
$ & $
\pmatrx{
 0 & g_{1 2} \\
 g_{1 2} & g_{2 2} }
$ & $
\pmatrx{
 0 & 0 \\
 0 & 0 }
$ & $
\pmatrx{
 0 & 0 \\
 0 & h_{2 2} }
$ &
$\pmatrx{0 & 0\\
0 & \eta_{22} }
$ &
$\textf{transitive}$\bsep{10pt}\\
$(N3)$ & $
\pmatrx{
 \bme_1 & \bme_2 \\
 \bme_2 & 0 }
$ & $
\pmatrx{
 g_{1 1} & g_{1 2} \\
 g_{1 2} & 0 }
$ & $
\pmatrx{
 0 & 0 \\
 0 & 0 }
$ & $
\pmatrx{
 h_{1 1} & h_{1 2} \\
 h_{1 2} & 0 }
$ & $
\pmatrx{
 0 & 0 \\
 0 & 0 }
$ &
 $\begin{array}{@{}c@{}}\textf{abelian}\\[-1ex] \textf{associative}\end{array}$\bsep{10pt}\\
$(N4)$ & $
\pmatrx{
 0 & \bme_1 \\
 0 & \bme_2 }
$ & $
\pmatrx{
 g_{1 1} & g_{1 2} \\
 g_{1 2} & g_{2 2} }
$ & $
\pmatrx{
 0 & f_{1 2} \\
 -f_{1 2} & 0 }
$ & $
\pmatrx{
 0 & 0 \\
 0 & h_{2 2} }
$ & $
\pmatrx{
 0 & 0 \\
 0 & 0 }
 $ & $\textf{associative}$\bsep{10pt}\\
$(N5)$ & $
\pmatrx{
 0 & \bme_1 \\
 0 & \bme_1+\bme_2 \\
}
$ & $
\pmatrx{
 0 & g_{1 2} \\
 g_{1 2} & g_{2 2} \\
}
$ & $
\pmatrx{
 0 & 0 \\
 0 & 0 \\
}
$ & $
\pmatrx{
 0 & 0 \\
 0 & h_{2 2} \\
}
$ & $
\pmatrx{
 0 & 0 \\
 0 & 0 }
$ & \bsep{10pt}\\
$\begin{array}{@{}c@{}}(N6)\\[-.5ex] \kappa\neq 0,1,-2\end{array}$ &
$
\pmatrx{
 \bme_1 & \kappa \bme_2 \\
 \bme_2 & 0
}
$ & $
\pmatrx{
 g_{11} & g_{12} \\
 g_{12} & 0
}
$ & $
\pmatrx{
 0 & 0 \\
 0 & 0 \\
}
$ & $
\pmatrx{
 h_{11} & 0 \\
 0 & 0 \\
}
$ & $
\pmatrx{
 0 & 0 \\
 0 & 0 }
 $ &
 \bsep{10pt}\\
$\begin{array}{@{}c@{}}(N6)\\[-.5ex] \kappa=-2\end{array}$ &
$
\pmatrx{
 \bme_1 & -2\bme_2 \\
 \bme_2 & 0 }
$ & $
\pmatrx{
 g_{11} & g_{1 2} \\
 g_{1 2} & 0 }
$ & $
\pmatrx{
 0 & 0 \\
 0 & 0 }
$ & $
\pmatrx{
h_{11} & 0 \\
 0 & 0 }
$ & $
\pmatrx{0 & \eta_{12}\\
\eta_{12} & 0 }
$ & $\det \eta \neq 0$ \bsep{10pt}\\
$\begin{array}{@{}c@{}}(A7)\\[-.5ex] \kappa\neq 1,-2\end{array}$ &
$
\pmatrx{
 \bme_2 & \kappa \bme_3 & 0 \\
 \bme_3 & 0 & 0 \\
 0 & 0 & 0
 }
$ & $
\pmatrx{
 g_{11} & g_{12} & g_{13} \\
 g_{12} & g_{13} & 0 \\
 g_{13} & 0 & 0
}
$ & $
\pmatrx{
0 & 0 & 0\\
0 & 0 & 0\\
0 & 0 & 0}
$ & $
\pmatrx{
 h_{11} & h_{12} & 0 \\
 h_{12} & 0 & 0 \\
 0 & 0 & 0
 }
$ & $
\pmatrx{
 \eta _{11} & 0 & 0\\
 0 & 0 & 0 \\
 0 & 0 & 0
 }
 $ &
 $\textf{transitive}$\bsep{10pt}\\
$\begin{array}{@{}c@{}}(A7)\\[-.5ex] \kappa=-2\end{array}$ &
$
\pmatrx{
 \bme_2 & -2 \bme_3 & 0 \\
 \bme_3 & 0 & 0 \\
 0 & 0 & 0
 }
$ & $
\pmatrx{
 g_{11} & g_{12} & g_{13} \\
 g_{12} & g_{13} & 0 \\
 g_{13} & 0 & 0
}
$ & $
\pmatrx{
0 & 0 & 0\\
0 & 0 & 0\\
0 & 0 & 0}
$ & $
\pmatrx{
 h_{11} & h_{12} & 0 \\
 h_{12} & 0 & 0 \\
 0 & 0 & 0
 }
$ & $
\pmatrx{
 \eta _{11} & 0 & \eta _{13} \\
 0 & \eta _{13} & 0 \\
 \eta _{13} & 0 & 0
 }
 $ &
 $\begin{array}{@{}c@{}}\textf{transitive}\\[-.5ex] \det \eta \neq 0\end{array}$\bsep{10pt}\\
$\begin{array}{@{}c@{}}(D6)\\[-.5ex] \kappa\neq 1,-2\end{array}$ &
$
\pmatrx{
 \bme_1 & \kappa \bme_2 & \frac{\kappa+1}{2}\bme_3 \\
 \bme_2 & 0 & 0 \\
 \bme_3 & 0 & -\frac{1}{2}\bme_2 }
$ & $
\pmatrx{
 g_{11} & -{\scriptstyle(\kappa +1)} g_{33} & g_{13} \\
 -{\scriptstyle(\kappa +1)} g_{33} & 0 & 0 \\
 g_{13} & 0 & g_{33} }
$ & $
\pmatrx{
 0 & 0 & {\scriptstyle\delta_{\kappa,-1}}f_{13} \\
 0 & 0 & 0 \\
 -{\scriptstyle\delta_{\kappa,-1}}f_{13} & 0 & 0 }
$ & $
\pmatrx{
 h_{11} & 0 & 0\\
 0 & 0 & 0\\
0 & 0 & 0}
$ & $
\pmatrx{
 0 & 0 & {\scriptstyle\delta_{\kappa,-5}}\eta_{13} \\
 0 & 0 & 0 \\
 {\scriptstyle\delta_{\kappa,-5}}\eta_{13} & 0 & 0 }
 $ &
\bsep{10pt}\\
$\begin{array}{@{}c@{}}(D6)\\[-.5ex] \kappa=-2\end{array}$ &
$
\pmatrx{
 \bme_1 & -2 \bme_2 & -\frac{1}{2}\bme_3 \\
 \bme_2 & 0 & 0 \\
 \bme_3 & 0 & -\frac{1}{2}\bme_2 }
$ & $
\pmatrx{
 g_{11} & g_{12} & g_{13} \\
 g_{12} & 0 & 0 \\
 g_{13} & 0 & g_{12}
}
$ & $
\pmatrx{
0 & 0 & 0\\
0 & 0 & 0\\
0 & 0 & 0}
$ & $
\pmatrx{
 h_{11} & 0 & 0\\
 0 & 0 & 0\\
0 & 0 & 0}
$ & $
\pmatrx{
 0 & \eta _{12} & 0 \\
 \eta _{12} & 0 & 0 \\
 0 & 0 & \eta _{12} }
 $ &
 $\det \eta \neq 0$ \bsep{10pt}\\
\hline
\end{tabular}
\end{table}
\end{landscape}
}

{\arraycolsep=3pt
\setlength{\tabcolsep}{-2pt}
\begin{landscape}
\begin{table}[h!]\centering
 \caption{The classification of the bilinear forms generating respective differential $2$-cocycles \eqref{cocycles} and \eqref{cocy} for all non-decomposable non-transitive $4$-dimensional Novikov algebras with non-degenerate bilinear forms $g$ and $\eta$. The Novikov algebras are extracted from the work \cite{FLS}, where
 the corresponding Hamiltonian operators of hydrodynamic type are classified. All the following Novikov algebras are non-abelian and non-associative.}\label{tab2}
 \vspace{1mm}

\begin{tabular}{ c c c c c c c }
\hline type & $\mathcal{B}$ & $g$ & $f$ & $h$ & $\eta$ \\\hline\\[-10pt]

$\begin{array}{@{}c@{}}(N3)\otimes (N6)\\[-.5ex] \kappa=-2\end{array}$ & $
\pmatrx{
\bme_1 & -2 \bme_2 & \bme_3 & -2 \bme_4 \\
 \bme_2 & 0 & \bme_4 & 0 \\
 \bme_3 & -2 \bme_4 & 0 & 0 \\
 \bme_4 & 0 & 0 & 0 }
$ & $
\pmatrx{
 g_{11} & g_{12} & g_{13} & g_{14} \\
 g_{12} & 0 & g_{14} & 0\\
 g_{13} & g_{14} & 0 & 0 \\
 g_{14} & 0 & 0 & 0 \\
}
$ & $
\pmatrx{
 0 & 0 & 0 & 0 \\
 0 & 0 & 0 & 0 \\
 0 & 0 & 0 & 0 \\
 0 & 0 & 0 & 0 \\
}
$ & $
\pmatrx{
 h_{11} & 0 & h_{13} & 0 \\
 0 & 0 & 0 & 0 \\
 h_{13} & 0 & 0 & 0 \\
 0 & 0 & 0 & 0 }
$ & $
\pmatrx{
 0 & \eta _{12} & 0 & \eta _{14} \\
 \eta _{12} & 0 & \eta _{14} & 0 \\
 0 & \eta _{14} & 0 & 0 \\
 \eta _{14} & 0 & 0 & 0 }
$ \bsep{10pt}\\

 & $
\pmatrx{
 \bme_1 & -2\bme_2 & 0 & -\bme_4 \\
\bme_2 & 0 & 0 & 0 \\
\bme_3 & 0 & 0 & -\bme_2 \\
\bme_4 & 0 & 0 & 0 }
$ & $
\pmatrx{
g_{11} & g_{12} & g_{13} & g_{14} \\
 g_{12} & 0 & 0 & 0 \\
 g_{13} & 0 & g_{33} & g_{12} \\
 g_{14} & 0 & g_{12} & 0 }
$& $
\pmatrx{
 0 & 0 & f_{13} & 0 \\
 0 & 0 & 0 & 0 \\
 -f_{13} & 0 & 0 & 0 \\
 0 & 0 & 0 & 0 }
$ & $
\pmatrx{
 h_{11} & 0 & 0 & 0 \\
 0 & 0 & 0 & 0 \\
 0 & 0 & 0 & 0 \\
 0 & 0 & 0 & 0 }
$ & $
\pmatrx{
 0 & \eta _{12} & 0 & 0 \\
 \eta _{12} & 0 & 0 & 0 \\
 0 & 0 & 0 & \eta _{12} \\
 0 & 0 & \eta _{12} & 0 }
$ \bsep{10pt}\\

 & $
\pmatrx{
\bme_1 & -2\bme_2 & 0 & -\bme_4 \\
\bme_2 & 0 & 0 & 0 \\
\bme_3-\bme_4 & 0 & 0 & -\bme_2 \\
\bme_4 & 0 & 0 & 0 }
$ & $
\pmatrx{
g_{11} & g_{12} & g_{13} & 0 \\
 g_{12} & 0 & 0 & 0 \\
 g_{13} & 0 & g_{33} & g_{12} \\
0 & 0 & g_{12} & 0 }
$& $
\pmatrx{
 0 & 0 & f_{13} & 0 \\
 0 & 0 & 0 & 0 \\
 -f_{13} & 0 & 0 & 0 \\
 0 & 0 & 0 & 0 }
$ & $
\pmatrx{
 h_{11} & 0 & 0 & 0 \\
 0 & 0 & 0 & 0 \\
 0 & 0 & 0 & 0 \\
 0 & 0 & 0 & 0 }
$ & $
\pmatrx{
 0 & \eta _{12} & 0 & 0 \\
 \eta _{12} & 0 & 0 & 0 \\
 0 & 0 & \eta _{12} & \eta _{12} \\
 0 & 0 & \eta _{12} & 0 }
$ \bsep{10pt}\\

 & $
\pmatrx{
 \bme_1+\bme_3 & -2 \bme_2 & 0 & -2 \bme_2-\bme_4 \\
 \bme_2 & 0 & 0 & 0 \\
 \bme_3 & 0 & 0 & -\bme_2 \\
 \bme_2+\bme_4 & 0 & 0 & 0 }
$ & $
\pmatrx{
 g_{11} & g_{12} & g_{13} & g_{14} \\
 g_{12} & 0 & 0 & 0 \\
 g_{13} & 0 & 0 & g_{12} \\
 g_{14} & 0 & g_{12} & 0 }
$& $
\pmatrx{
 0 & 0 & 0 & 0 \\
 0 & 0 & 0 & 0 \\
0 & 0 & 0 & 0 \\
 0 & 0 & 0 & 0 }
$ & $
\pmatrx{
 h_{11} & 0 & 0 & 0 \\
 0 & 0 & 0 & 0 \\
 0 & 0 & 0 & 0 \\
 0 & 0 & 0 & 0 }
$ & $
\pmatrx{
 0 & \eta _{12} & 0 & 0 \\
 \eta _{12} & 0 & 0 & 0 \\
 0 & 0 & 0 & \eta _{12} \\
 0 & 0 & \eta _{12} & 0 }
$ \bsep{10pt}\\

$\begin{array}{@{}c@{}}\kappa\geqslant -\frac{1}{2}\\ \kappa\neq 0,1\end{array}$ & $
\pmatrx{
\bme_1 & \kappa \bme_2 & -(\kappa +1) \bme_3 & -2 \bme_4 \\
 \bme_2 & 0 & -(\kappa +1) \bme_4 & 0 \\
 \bme_3 & \kappa \bme_4 & 0 & 0 \\
 \bme_4 & 0 & 0 & 0 }
$ & $
\pmatrx{
 g_{11} & g_{12} & g_{13} & g_{14} \\
 g_{12} & 0 & g_{14} & 0 \\
 g_{13} & g_{14} & 0 & 0 \\
 g_{14} & 0 & 0 & 0 }
$& $
\pmatrx{
 0 & 0 & 0 & 0 \\
 0 & 0 & 0 & 0 \\
0 & 0 & 0 & 0 \\
 0 & 0 & 0 & 0 }
$ & $
\pmatrx{
 h_{11} & 0 & 0 & 0 \\
 0 & 0 & 0 & 0 \\
 0 & 0 & 0 & 0 \\
 0 & 0 & 0 & 0 }
$ & $
\pmatrx{
 0 & 0 & 0 & \eta _{14} \\
 0 & 0 & \eta _{14} & 0 \\
 0 & \eta _{14} & 0 & 0 \\
 \eta _{14} & 0 & 0 & 0 }
$ \bsep{10pt}\\

 & $
\pmatrx{
 \bme_1-2 \bme_2 & \bme_3 & 4\bme_4- \bme_3 & -2 \bme_4 \\
 \bme_2-2 \bme_3 & \bme_4 & -\bme_4 & 0 \\
 \bme_3-2 \bme_4 & 0 & 0 & 0 \\
 \bme_4 & 0 & 0 & 0}
$ & $
\pmatrx{
 g_{11} & g_{12} & g_{13} & g_{14} \\
 g_{12} & g_{13} & g_{14} & 0 \\
 g_{13} & g_{14} & 0 & 0 \\
 g_{14} & 0 & 0 & 0 }
$& $
\pmatrx{
 0 & 0 & 0 & 0 \\
 0 & 0 & 0 & 0 \\
0 & 0 & 0 & 0 \\
 0 & 0 & 0 & 0 }
$ & $
\pmatrx{
 h_{11} & 0 & 0 & 0 \\
 0 & 0 & 0 & 0 \\
 0 & 0 & 0 & 0 \\
 0 & 0 & 0 & 0 }
$ & $
\pmatrx{
 0 & 0 & 0 & \eta _{14} \\
 0 & 0 & \eta _{14} & 0 \\
 0 & \eta _{14} & 0 & 0 \\
 \eta _{14} & 0 & 0 & 0 }
$ \bsep{10pt}\\

\hline
\end{tabular}
\end{table}
\end{landscape}}

\subsection*{Acknowledgements}

I would like to thank Maciej B{\l}aszak, Artur Sergyeyev and Ian Strachan for various conversations concerning theory of higher-dimensional systems and/or topics related to Novikov algebras. I~would also like to thank the anonymous referees for providing to my attention the reference~\cite{FLS} and some important comments.

\pdfbookmark[1]{References}{ref}
\LastPageEnding

\end{document}